\def\version{arxiv}
\tikzstyle{none}=[inner sep=0pt]
\tikzstyle{rn}=[circle,fill=Red,draw=black,line width=0.8 pt]
\tikzstyle{gn}=[circle,fill=Lime,draw=black,line width=0.8 pt]
\tikzstyle{yn}=[circle,fill=Yellow,draw=black,line width=0.8 pt]
\tikzstyle{smallCirc}=[circle,fill=White,draw=black]
\tikzstyle{bigCirc}=[circle,fill=White,draw=black]
\tikzstyle{bn}=[inner sep=0pt, circle,fill=black,draw=black,line width=0.8 pt,minimum size=4]
\tikzstyle{simple}=[-,draw=black,line width=1]
\tikzstyle{tick}=[-,draw=black,postaction={decorate},decoration={markings,mark=at position .5 with {\draw (0,-0.1) -- (0,0.1);}},line width=2.000]
\tikzstyle{arrow}=[-,draw=black,postaction={decorate},decoration={markings,mark=at position .5 with {\arrow{>}}}]
\tikzstyle{arrowHead}=[->,draw=black]
\pgfplotsset{compat=1.14}
\pgfplotsset{groverExp/.style={ymin=-1.2, ymax=2.4, ytick={0}, grid=major, 
	xtick={0}, grid style={yticklabel=\empty}, ytick style={draw=none}}}
\pgfplotsset{qftExp/.style={ymin=.8, ymax=3, extra y ticks={1}, 
  extra tick style={grid style={red},grid=major},
  width = \textwidth, height = 4cm, xlabel = {$k$}, ylabel = {$\tilde{q}(k)$}}}
\DeclarePairedDelimiter{\ceil}{\lceil}{\rceil}
\DeclarePairedDelimiter{\floor}{\lfloor}{\rfloor}
\newcommand{\py}[1]{\mintinline{python}{#1}}
\patchcmd{\abstract}{-.5em}{0em}{}{}
\newcommand\ifdeftostring[3]{
  \def\next{#2}%
  \ifx#1\next
    #3
  \fi
}
\ifdeftostring{\version}{qip}
  {
\smartqed
  }
\ifdeftostring{\version}{arxiv}
  {

\usepackage{amsthm}
\usepackage{authblk}
\newtheorem{definition}{Definition}

\newtheorem{example}{Example}
\newtheorem{remark}{Remark}
\newtheorem{proposition}{Proposition}
\newcommand\blfootnote[1]{%
  \begingroup
  \renewcommand\thefootnote{}\footnote{#1}%
  \addtocounter{footnote}{-1}%
  \endgroup
}
  }
\newcommand{\quantcert}{\url{https://quantcert.github.io/Mermin-eval}}
\ifdeftostring{\version}{qip}
  {
\title{Mermin Polynomials for Entanglement Evaluation\\ 
in Grover's algorithm and Quantum Fourier Transform}
\author{Henri de Boutray$^{1,2}$ \and Hamza Jaffali$^{1,2}$ \and 
  Frédéric Holweck$^{1,3}$ \and Alain Giorgetti$^{1,2}$ \and 
  Pierre-Alain Masson$^{1,2}$}
\authorrunning{Henri de Boutray \and Hamza Jaffali \and Frédéric Holweck \and 
  Alain Giorgetti \and Pierre-Alain Masson}
\date{}
\institute{
  Corresponding author: henri.de-boutray[at]univ-fcomte.fr \at
  \and
  Affiliations: \at
  $\begin{array}{ll}
  ^1&\text{Univ. Bourgogne Franche-Comté (UBFC)}\\
  ^2&\text{Institut FEMTO-ST (UMR 6174 - CNRS/UBFC/UFC/ENSMM/UTBM)}\\
  ^3&\text{Laboratoire Interdisciplinaire Carnot de Bourgogne (ICB, UMR
  6303 - CNRS/UB/UTBM)}\\
  \end{array}$
}
  }
\ifdeftostring{\version}{arxiv}
  {
\title{Mermin Polynomials for Entanglement Evaluation\\ 
in Grover's algorithm and Quantum Fourier Transform}
\author[1,2]{Henri de Boutray}
\author[1,2]{Hamza Jaffali}
\author[1,3]{Frédéric Holweck}
\author[1,2]{Alain Giorgetti}
\author[1,2]{Pierre-Alain Masson}
\affil[1]{Univ. Bourgogne Franche-Comté (UBFC)}
\affil[2]{Institut FEMTO-ST (UMR 6174 - CNRS/UBFC/UFC/ENSMM/UTBM)}
\affil[3]{Laboratoire Interdisciplinaire Carnot de Bourgogne (ICB, UMR 6303)}
\date{}
  }
\begin{document}

\maketitle

\begin{abstract}
The entanglement of a quantum system can be valuated using Mermin polynomials.
This gives us a means to study entanglement evolution during the execution of
quantum algorithms. We first consider Grover's quantum search algorithm,
noticing that states during the algorithm are maximally entangled in the
direction of a single constant state, which allows us to search for a single
optimal Mermin operator and use it to evaluate entanglement through the whole
execution of Grover's algorithm. Then the Quantum Fourier Transform is also 
studied with Mermin polynomials. A different optimal Mermin operator is searched
at each execution step, since in this case there is no single direction of
evolution. The results for the Quantum Fourier Transform are compared to results
from a previous study of entanglement with Cayley hyperdeterminant. All our
computations can be replayed thanks to a structured and documented open-source
code that we provide. 
\ifdeftostring{\version}{arxiv}
  {
\blfootnote{Corresponding author: henri.de-boutray[at]univ-fcomte.fr}
  }
\end{abstract}

\paragraph{Keywords:} Mermin polynomials, MABK violation, quantum programs, 
  entanglement property, Grover's quantum search algorithm, Quantum Fourier
  Transform.

\newpage

\section{Introduction} 
\label{sec:introduction}

Quantum entanglement has been identified as a key ingredient in the speed-up of
quantum algorithms~\cite{JL03}, when compared to their classical counterparts.
Our work is in line with previous work on a deeper understanding of the
role of entanglement in this speed-up~\cite{EJ98,BP02,CBAK13,KM06}.

We focus on Grover's algorithm~\cite{Gro96} and the Quantum Fourier Transform
(QFT)~\cite[Chap.II-Sec.5] {NC10} which plays a key role in Shor's
algorithm~\cite{Sho94}. We choose these two examples because they both provide
quantum speed-up (quadratic for Grover and exponential for QFT) and are well
understood and described in the literature~\cite{NC10}. Previous work tackled
entanglement in Grover's algorithm and the QFT from two perspectives:
quantitatively, with the Geometric Measure of Entanglement (GME)~\cite{WG03},
separately for Grover's algorithm~\cite{RBM13} and the QFT~\cite{SSB05}, and
qualitatively, by observing the different entanglement SLOCC classes traversed
by an execution, for both algorithms~\cite{JH19}.

Instead of directly measuring entanglement we use Mermin
polynomials~\cite{Mer90,ACG+16,AL16} to demonstrate the non-local properties of
the states generated by these algorithms. As a generalization of the CHSH
inequalities~\cite{CHSH69}, Mermin polynomials have two advantages: the quantum
states' evaluation can be compared to a classical bound, and the evaluation has
a possible physical implementation. Batle et al.~\cite{BOF+16} previously
investigated non-local properties during Grover's algorithm using Mermin
polynomials. However they concluded to the absence of non-local phenomena. In
the present work we setup the Mermin polynomials in such a way that we exhibit,
on the contrary, violation of the classical inequalities in Grover's algorithm.
Moreover our evaluation techniques are more efficient, allowing us to reach 12
qubits. We also exhibit non-local behavior in the QFT.

After Section~\ref{sec:background} presenting some background on Grover's
algorithm, the QFT and Mermin polynomials, Section~\ref{sec:method_and_result}
presents our method and results concerning the evaluation of entanglement in
Grover's algorithm and the QFT. In particular we exhibit Mermin's inequalities
violations in both algorithms. In this section we also compare the results
obtained with the Mermin polynomials to previous results~\cite{JH19} using the
Cayley hyperdeterminant. Finally, Section~\ref{sec:implementation} documents the
code developed for this evaluation, in order to make it reusable by anyone
wishing to\footnote{The source code is available at \quantcert.}. In addition,
Appendix~\ref{sec:explicit_states_for_grover_s_algorithm} recalls known
properties of the states in Grover's algorithm and
Appendix~\ref{appendix:cayley_hyperdeterminant_D2222} recalls the definition of
the Cayley hyperdeterminant.


\section{Background} 
\label{sec:background}

This section provides the necessary background to the reader, regarding Grover's
algorithm~(\ref{sub:grover_algorithm}), some properties of the states during its
execution~(\ref{sub:properties_of_states_in_the_grover_algorithm}), the Quantum
Fourier Transform (\ref{sub:qft}) and the Mermin
operators~(\ref{sub:mermin_polynomials_and_mermin_inequalities}).

  \subsection{Grover's algorithm} 
  \label{sub:grover_algorithm}

We sum up here Grover's algorithm, widely described in the literature
(\cite{Gro96,LMP03} and~\cite[chapter 6]{NC10}).

Grover's algorithm aims to find objects satisfying a given condition in an
unsorted database of $2^n$ objects, \emph{i.e.} to solve the following problem.

\noindent\fbox{\parbox{\textwidth}{\emph{
Given a positive integer $n$, $N=2^n$, $\Omega = \llbracket 0, N-1\rrbracket$
and the characteristic function $f~: \Omega \rightarrow \{0,1\}$ of some subset
$S$ of $\Omega$ ($f(x) = 1$ iff $x \in S$), find in $\Omega$ an element of $S$
only by applying $f$ to some elements of $\Omega$.
}}}

\vspace{.5em}
Grover's algorithm provides a quadratic speedup over its classical counterparts.
Indeed, assuming that each application of $f$ is done in one step, it runs in 
$\mathcal{O}(\sqrt{N})$ instead of $\mathcal{O}(N)$.

\begin{figure}[hbt!]
$$
\Qcircuit @C=1em @R=.7em {
\lstick{\ket{0}} & /^n \qw & \multigate{1}{H^{\otimes n+1}} 
  & \multigate{1}{U_f} & \gate{\mathcal{D}} 
  & \qw & \cdots &
  & \multigate{1}{U_f} & \gate{\mathcal{D}}
  & \meter & \cw \\
\lstick{\ket{1}} & \qw & \ghost{H^{\otimes n+1}} 
  & \ghost{U_f} & \qw 
  & \qw & \cdots &
  & \ghost{U_f} & \qw 
  & \qw & \qw 
\gategroup{1}{4}{2}{5}{.7em}{--}
\gategroup{1}{9}{2}{10}{.7em}{--}
}
$$
\caption{Grover's algorithm in circuit formalism}
\label{fig:grover_circ}
\end{figure}
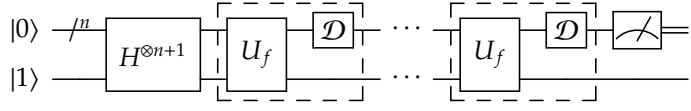

Figure~\ref{fig:grover_circ} shows this algorithm as a circuit composed of 
several gates that we now describe. $H^{\otimes n+1}$ is simply the Hadamard
gate on each wire. When applied on the $n$ first registers initialized at
$\ket{0}$, it computes the superposition of all states, \emph{i.e.},

$$H^{\otimes n}\ket{0}=\dfrac{1}{\sqrt{N}}\sum_{{\bf x}=0}^{N-1} \ket{\bf x}.$$

\noindent After $H^{\otimes n+1}$, the dashed box (hereafter called
$\mathcal{L}$) is repeated $k_{opt}=\left\lfloor\frac{\pi}{4} \sqrt{\frac{N}
{|S|}}\right\rfloor$ times. 

The circuit $\mathcal{L}$ is composed of the \emph{oracle} $U_f$ and the
\emph{diffusion operator} $\mathcal{D}$. The gate $U_f$ computes the classical
function $f$. It has the following effect on states:

$$\forall ({\bf x},y) \in \llbracket0,N\rrbracket\times\{0, 1\},\ U_f \left(
\ket{\bf x}\otimes\ket{y} \right) = \ket{\bf x}\otimes\ket{y \oplus f(\bf x)}.$$

\noindent On the circuit of Figure~\ref{fig:grover_circ} one can show that the
last register remains unchanged when applying the $U_f$ gate. Indeed after the
Hadamard gate $H$, this last register becomes $H \ket{1} = \dfrac{\ket{0} -
\ket{1}}{\sqrt{2}}$. Now consider a state $\ket{{\bf x}} \otimes \dfrac{\ket{0}
- \ket{1}}{\sqrt{2}}$. Then 

$$U_f~\left(\ket{\bf x} \otimes \dfrac{\ket{0} - \ket{1}}{\sqrt{2}}\right) =
\begin{cases}
\ket{\bf x} \otimes \dfrac{\ket{1}-\ket{0}}{\sqrt{2}} & \text{ if } f({\bf x})=1\\
\ket{\bf x} \otimes \dfrac{\ket{0}-\ket{1}}{\sqrt{2}} & \text{otherwise.}
\end{cases}$$

\noindent In other words, 

$$U_f~\left(\ket{\bf x} \otimes \dfrac{\ket{0} - \ket{1}}{\sqrt{2}}\right) =
(-1)^{f({\bf x})}~\left(\ket{\bf x} \otimes \dfrac{\ket{0} -
\ket{1}}{\sqrt{2}}\right).$$

\noindent  One says that the oracle $U_f$ marks the solutions of the problem by
changing their phase to $-1$. To emphasize this, we adopt the usual convention
which consists of ignoring the last register and considering that $U_f$ has the
following effect:

$$\begin{cases} 
U_f~\ket{{\bf x}} =-\ket{{\bf x}}, \forall {\bf x} \in S\\ 
U_f~\ket{{\bf x}} = \ket{{\bf x}}, \forall {\bf x} \notin S
\end{cases}.$$

\begin{figure}[hbt!]
\begin{subfigure}{0.5\textwidth} \begin{tikzpicture} \begin{axis}[groverExp]
\addplot+[ycomb] coordinates {(1, 1) (2, 1)		   (4, 1)};
\addplot+[ycomb] coordinates {					   (3, 1)};
  \node (title) at (3.2,-.2) {$\bf x_0$};
\end{axis} \end{tikzpicture} 
\caption{}
\label{fig:superposition}
\end{subfigure}
\begin{subfigure}{0.5\textwidth} \begin{tikzpicture} \begin{axis}[groverExp]
\addplot+[ycomb] coordinates {(1, 1) (2, 1)		 	 (4, 1)};
\addplot+[ycomb] coordinates {					   (3, -1)};
  \node (title) at (3.2,-.2) {$\bf x_0$};
\end{axis} \end{tikzpicture} 
\caption{}
\label{fig:oracle}
\end{subfigure}

\begin{subfigure}{0.5\textwidth} \begin{tikzpicture}
\begin{axis}[ymin=-1.2, ymax=2.4, ytick={0}, grid=major, xtick={0}, grid style={yticklabel=\empty},
	extra y ticks={0.6}, extra tick style={yticklabel=\empty,grid style={black, dashed},grid=major}, ytick style={draw=none}]
\addplot+[ycomb] coordinates {(1, .2) (2, .2)			    (4, .2)};
\addplot+[ycomb] coordinates {					     (3, 2.2)};
\addplot+[ycomb, blue, dashed, mark=*] coordinates {(1, 1) (2, 1)	    (4, 1)};
\addplot+[ycomb, red, dashed, mark=square] coordinates {		    (3, -1)};
  \node (title) at (3.2,-.2) {$\bf x_0$};
\end{axis} \end{tikzpicture} 
\caption{}
\label{fig:invertion}
\end{subfigure}
\begin{subfigure}{0.5\textwidth} \begin{tikzpicture} \begin{axis}[groverExp]
\addplot+[ycomb] coordinates {(1, .2) (2, .2)         (4, .2)};
\addplot+[ycomb] coordinates {               (3, 2.2)};
  \node (title) at (3.2,-.2) {$\bf x_0$};
\end{axis} \end{tikzpicture} 
\caption{}
\label{fig:inverted}
\end{subfigure}
\caption{First iteration of loop $\mathcal{L}$ in Grover's algorithm: the combs 
  represent the amplitude of each element}
\label{fig:grover_run}
\end{figure}
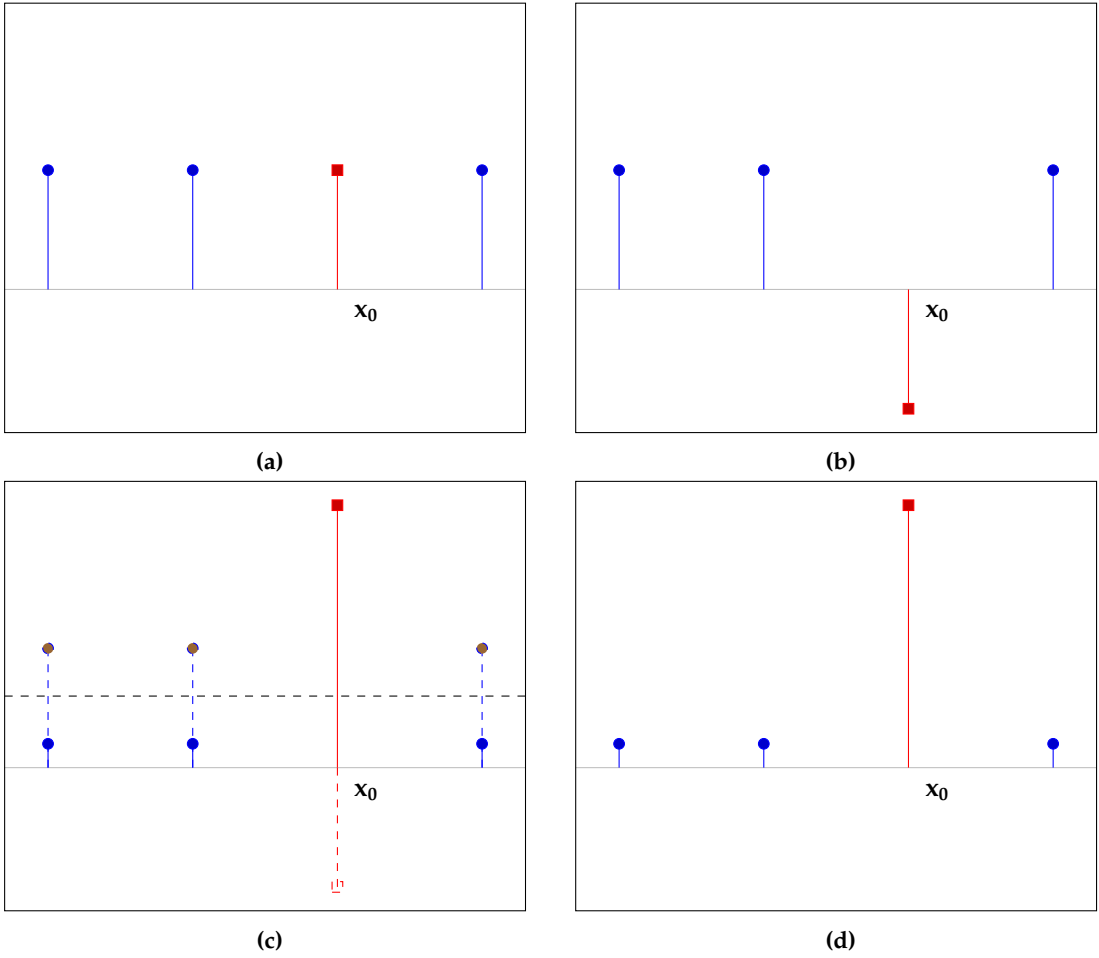  

The diffusion operator $\mathcal{D} = 2(\ket{+}\bra{+})^{\otimes n} - I_{2^n}$
performs the inversion about the mean. Indeed if $\ket{\varphi} =
\sum_{\mathbf{i}=0}^{N-1} \alpha_\mathbf{i}\ket{\mathbf{i}}$ and $\bar{\alpha} =
\frac{1}{N} \sum_{\mathbf{i}=0}^{N-1} \alpha_\mathbf{i}$ denotes the mean value
of the amplitudes of $\ket{\varphi}$,  then  $\mathcal{D}~\ket{\varphi} =
\sum_{\mathbf{i}=0}^{N-1} \alpha'_\mathbf{i}\ket{\mathbf{i}}$ with
$\alpha'_\mathbf{i} - \bar{\alpha} = \bar{\alpha} - \alpha_\mathbf{i}$.

Figure~\ref{fig:grover_run} provides a visualization of the effect of the
beginning of the algorithm on the amplitudes of $\ket{\varphi}$. For readability
purposes, only 4 amplitudes are represented, and only one element is searched 
($S = \{{\bf x_0}\}$), shown with a square instead of a  bullet. The state is
initialized to $\ket{\mathbf{0}}$. The state resulting of applying $H^{\otimes
n}$ is the superposition of all states $\ket{+}^{\otimes n}$
(Figure~\ref{fig:superposition}). Then the oracle $U_f$ flips the searched
element (Figure~\ref{fig:oracle}), and the diffusion operator $\mathcal{D}$
performs the inversion about the mean (Figures~\ref{fig:invertion}
and~\ref{fig:inverted}).

The final measure yields the index of an element from $S$ with high
probability.


  \subsection{Properties of states in Grover's algorithm} 
  \label{sub:properties_of_states_in_the_grover_algorithm}

The evolution of the amplitudes of the state $\ket{\varphi}$ during the
execution of the algorithm is well known~\cite{NC10}. If we denote by $\theta$
the real number such that $\sin(\theta/2)=\sqrt{|S|/N}$, then  after $k$
iterations (\emph{i.e.}, after applying $k$ times the circuit $\mathcal{L}$),
the state is:

\begin{equation}
\label{eq:explicit_states}
\ket{\varphi_k}=\alpha_k \sum_{{\bf x} \in S}\ket{\bf x} + 
  \beta_k \sum_{\bf x \notin S} \ket{\bf x}
\end{equation}

\noindent with $\alpha_k=\dfrac{1}{\sqrt{|S|}}\sin(\dfrac{2k+1}{2}\theta)$ and
$\beta_k=\dfrac{1}{\sqrt{N-|S|}}\cos(\dfrac{2k+1}{2}\theta)$. The sequences
$(\alpha_k)_k$ and $(\beta_k)_k$ are two real sequences respectively increasing
and decreasing  when $k$ varies between 0 and $k_{opt} = \left\lfloor
\frac{\pi}{4} \sqrt{\frac{N}{|S|}}\right\rfloor$.

An alternative representation of the evolution of the states during the
execution of Grover's algorithm is proposed in~\cite{HJN16}. An elementary
algebra calculation (See
Appendix~\ref{sec:explicit_states_for_grover_s_algorithm},
Proposition~\ref{prop:explicit_state_grover}) shows that 
\begin{equation}
\label{eq:explicit_states_bis}
\ket{\varphi_k} = \tilde{\alpha}_k \sum_{\bf x \in S} \ket{\bf x} + 
  \tilde{\beta}_k \ket{+}^{\otimes n} 
\end{equation}

\noindent with $\tilde{\alpha}_k=\alpha_k-\beta_k$ and $\tilde{\beta}_k =
2^{n/2}\beta_k$. The sequences $(\tilde{\alpha}_k)$ and $(\tilde{\beta}_k)$ are
respectively increasing and decreasing on $\llbracket0, k_{opt}\rrbracket$ (see
Appendix~\ref{sec:explicit_states_for_grover_s_algorithm},
Proposition~\ref{prop:grover_a_b_monotonous}).

In particular, if one considers the case of one searched element $\ket{\bf
x_0}$, \emph{i.e.} $S=\{ {\bf x_0} \}$, then 
Equation~(\ref{eq:explicit_states_bis}) becomes
\begin{equation}
\label{eq:classical_explicit_states_one_element}
\ket{\varphi_k} = \tilde{\alpha}_k \ket{\bf x_0} + \tilde{\beta}_k 
  \ket{+}^{\otimes n}.
\end{equation}

\begin{figure}[hbt!]
\begin{center}
\begin{tikzpicture}
  \begin{pgfonlayer}{nodelayer}
\node [style=none] (0) at (-4.5,0.5) {};
\node [style=bn, label={below:$\ket{+}^{\otimes n}$}] (01) at (-3.6,-.1) {};
\node [style=none] (1) at (0.5,1.5) {};
\node [style=none] (2) at (-2,3.5) {};
\node [style=bn, label={above right:$\ket{\bf x_0}$}] (3) at (-2.5,1.5) {};
\node [style=none] (4) at (3.5,2.5) {};
\node [style=bn, label={above left:$\ket{\varphi_{\floor{k_{opt}/2}}}$}] (5) at (-3.1,0.65) {};
\node [style=none, label={right:$X$}] (5) at (2.3,1.5) {};
\node [style=none] (BoundingBoxFixUp) at (0,3.6) {};
\node [style=none] (BoundingBoxFixleft) at (3.6,0) {};
\node [style=none] (BoundingBoxFixDown) at (0,-1.1) {};
  \end{pgfonlayer}
  \begin{pgfonlayer}{edgelayer}
\draw[/tikz/overlay] (0.center) .. controls (-1,-2) and (0.5,-1) .. (1.center);
\draw[/tikz/overlay] (1.center) .. controls (0.5,3.5) and (-0.5,3.5) .. (2.center);
\draw[/tikz/overlay] (2.center) .. controls (-3.5,3.5) and (-3.5,3) .. (3.center);
\draw[/tikz/overlay] (3.center) .. controls (0,-1.5) and (1.5,-2.5) .. (4.center);
\draw[dotted, thick, /tikz/overlay] (3.center) -- (01.center);
  \end{pgfonlayer}
\end{tikzpicture}
\end{center}
\caption{States path (dotted) in relation with the 
variety of separable states ($X$) during Grover's algorithm
execution~\cite[Figure 2]{HJN16}}
\label{fig:grover_sec}
\end{figure}
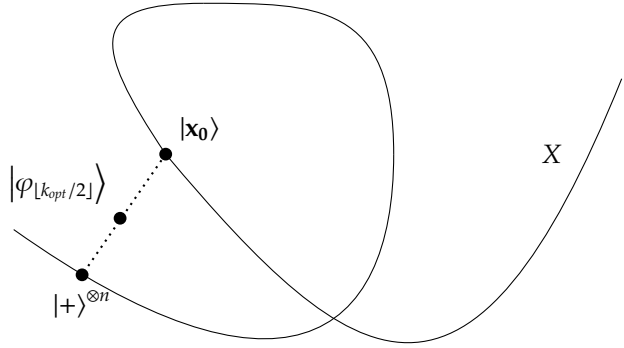

Figure~\ref{fig:grover_sec} provides a pictural interpretation of Equation~(
\ref{eq:classical_explicit_states_one_element}). The ``curve'' $X$ represents
the variety (set defined by algebraic equations) of separable states. In the
picture the evolution of the state $\ket{\varphi_k}$ is seen as a point moving
on a secant line of the set of separable states, starting from the separable
state $\ket{+}^{\otimes n}$ and moving to the separable state $\ket{\bf x_0}$.

In \cite{HJN16}, it is proven that  for states in  superposition $\alpha
\ket{\bf x_0}+\beta\ket{+}^{\otimes n}$ with $\alpha, \beta\in \mathbb{R}_+$,
the GME is maximal when $\alpha=\beta$. Let $\ket{\varphi_{ent}}$ hereafter
denote the state $(\ket{\bf x_0}+\ket{+}^{\otimes n})/K$ normalized with the
factor $K$. Figure~\ref{fig:grover_sec} indicates that the search goes through a
maximally entangled state around the step $k_{opt}/2$ and that the maximally
entangled states generated by Grover's algorithm should be close to that state
$\ket{\varphi_{ent}}$.


\subsection{Quantum Fourier Transform (QFT)}
\label{sub:qft}

The quantum analogous of the Discrete Fourier Transform (DFT) is the Quantum
Fourier Transform (QFT). It acts linearly on quantum registers and is a key
step in Shor's algorithm, permitting to reveal the period of the function
defining the factorization problem \cite{Sho94,NC10}.

In the context of Shor's algorithm, the QFT is used to transform a periodic
state into another one to obtain its period. The periodic state
$\ket{\varphi^{l,r}}$ of $n$ qubits with shift $l$ and period $r$ is defined
by

\begin{equation*}
\label{eq:periodic}
 \ket{\varphi^{l,r}} = \frac{1}{\sqrt{A}} \sum_{i=0}^{A-1}\ket{l+ir}
 \hspace{10mm} \text{with } A = \ceil*{\frac{N-l}{r}} \text{ and } N=2^n,
\end{equation*}
\noindent for $0 \leq l \leq N-1$ and $1 \leq r \leq N-l-1$.
 
For example, for the periodic 4-qubit states, with shift $l=1$ and period $r=5$,
there are $A = \ceil*{\frac{16-1}{5}} = 3$ basis elements, so:

\begin{equation*}
\label{eq:example_4_qubit_periodic}
\ket{\varphi^{1,5}}=\frac{1}{\sqrt{3}}\big(\ket{\bf 1}+\ket{\bf 6}+
\ket{\bf 11}\big)=\frac{1}{\sqrt{3}}\big(\ket{0001}+\ket{0110}+\ket{1011}\big)~.
\end{equation*}

When applied to one of the computational basis states $\ket{k} \in \{\ket{0},
\ket{1}, \dots, \ket{N-1}\}$ (expressed here in decimal notation), the result of 
the QFT can be expressed by
\begin{equation*} 
 QFT~\ket{k} = \frac{1}{\sqrt{N}} \sum_{j=0}^{N-1} \omega^{kj} \ket{j}, 
\end{equation*}
\noindent where $\omega = e^{\frac{2i\pi}{N}}$ is the primitive $N$-th root of
unity. Then, for any $n$-qubit state $\ket{\psi} = \sum_{j=0}^{N-1}x_j \ket{j}$, 
we get
\begin{equation}
\label{eq:qftbase}
  QFT~\ket{\psi} = \sum_{k=0}^{N-1} y_k \ket{k} ~~~~ 
\text{with } y_k = \frac{1}{\sqrt{N}} \sum_{j=0}^{N-1}  x_j \cdot \omega^{k j}.
\end{equation} 

\noindent The corresponding matrix is
\begin{equation*}
  QFT_N = \frac{1}{\sqrt{N}}
  \begin{pmatrix}
  1 & 1 & 1 & 1 & \cdots & 1 \\
  1 & \omega^{1} & \omega^{2} & \omega^{3} & \cdots & \omega^{N-1}\\
  1 & \omega^{2} & \omega^{4} & \omega^{6} & \cdots & \omega^{2(N-1)} \\
  1 & \omega^{3} & \omega^{6} & \omega^{9} & \cdots & \omega^{3(N-1)} \\
  \vdots & \vdots & \vdots & \vdots & \ddots & \vdots \\
  1 & \omega^{N-1} & \omega^{2(N-1)} & \omega^{3(N-1)} & \cdots & \omega^{(N-1)(N-1)} 
  \end{pmatrix}.
\end{equation*}

In the circuit representation, the QFT can be decomposed  into several one-qubit
or two-qubit operators. To obtain this decomposition three different kinds of
gates are used: the Hadamard gate, the SWAP gate and the 
\emph{controlled}-$R_k$ gates, defined by the matrices and circuits
\begin{equation*}
\label{eq:swap}
SWAP = \begin{pmatrix}
     1 & 0 & 0 & 0 \\
     0 & 0 & 1 & 0 \\
     0 & 1 & 0 & 0 \\
     0 & 0 & 0 & 1 \\
   \end{pmatrix} ~~~~~~~~~~~~~~ \Qcircuit @C=1em @R=.7em {
   \lstick{\ket{x} } & \ctrl{1} & \targ     & \ctrl{1}   & \rstick{\ket{y}}\qw\\
   \lstick{\ket{y} } & \targ    & \ctrl{-1} & \targ      & \rstick{\ket{x}}\qw\\
   }
\end{equation*}
and

\begin{equation*}
\label{eq:cRk}
cR_k = \begin{pmatrix}
     1 & 0 & 0 & 0 \\
     0 & 1 & 0 & 0 \\
     0 & 0 & 1 & 0\\
     0 & 0 & 0 & e^{\frac{2i\pi}{2^k}} \\
   \end{pmatrix}  ~~~~~~~~~~~~~~~~ \Qcircuit @C=1em @R=.7em {
   \lstick{\ket{x} }  & \gate{R_k} & \qw\\
   \lstick{\ket{y} }  & \ctrl{-1}      &\qw\\
   }
\end{equation*}
 The complete circuit of the QFT is provided in Figure
\ref{fig:qft_circuit}, where the $n$-qubit SWAP operation consists of swapping
$\ket{x_1}$ with $\ket{x_n}$, $\ket{x_2}$ with $\ket{x_{n-1}}$, and so on.

\begin{figure}[!ht]
   \begin{align*}
   \Qcircuit @C=1em @R=.7em {
   \lstick{\ket{x_1}}   &\gate{H}               &\gate{R_2}       &\gate{R_3}   
      &\ustick{...}\qw    &\gate{R_n}             &\qw              &\qw          
        &\qw                &\qw                    &\qw              &\qw          
          &\qw                &\multigate{6}{SWAP}\\
   \lstick{\ket{x_2}}   &\qw                    &\ctrl{-1}        &\qw          
      &\qw                &\qw                    &\gate{H}         &\gate{R_2}   
        &\ustick{...}\qw    &\gate{R_{n-1}}         &\qw              &\qw     
          &\qw                &\ghost{SWAP}\\  
   \lstick{\ket{x_3}}   &\qw                    &\qw              &\ctrl{-2}  
      &\qw                &\qw                    &\qw              &\ctrl{-1}  
        &\qw                &\qw                    &\ustick{...}\qw  &\qw        
          &\qw                &\ghost{SWAP}\\
   \\
   \lstick{\vdots}      &                       &\vdots           &           
      &\vdots             &                       &\vdots           &         
        &\vdots             &                       &\ustick{...}     &\vdots 
          &                   &\\
   \\
   \lstick{\ket{x_n}}   &\qw                    &\qw              &\qw        
      &\qw                &\ctrl{-6}              &\qw              &\qw        
        &\qw                &\ctrl{-5}              &\ustick{...}\qw  &\gate{H} 
          &\qw                &\ghost{SWAP}\\
   }
   \end{align*}
   \caption{Quantum circuit representation of the Quantum Fourier Transform for 
    a $n$-qubit register}
   \label{fig:qft_circuit}
   \end{figure}
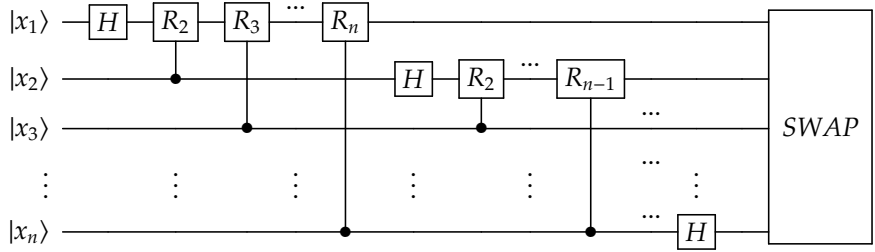

\begin{remark}
One of the reasons that explain the exponential speed-up in Shor's quantum
algorithm, is the complexity of the QFT which is quadratic with respect to the
number of registers. By comparison, classically, the complexity of the Fast
Fourier Transform algorithm that computes the DFT of a vector with $2^n$ entries
is in $\mathcal{O}(n2^n)$.
\end{remark}

  \subsection{Mermin polynomials and Mermin inequalities} 
  \label{sub:mermin_polynomials_and_mermin_inequalities}
  

Entanglement variations during the execution of Grover's algorithm have been
studied either by computing the evolution of the Geometric Measure of
Entanglement~\cite{RBM13,WG03}, or by computing other measures of entanglement
like the concurrence or measures based on invariants~\cite{BOF+16,WG03,HJN16}.
Similarly, for Shor's algorithm and in particular to study the variation of
entanglement within the QFT, numerical computation of the Geometric Measure of
Entanglement was carried in~\cite{SSB05}. Let us also mention~\cite{JH19} where
the evolution of entanglement in Grover and Shor algorithms is studied
qualitatively by considering the classes of entanglement reached during the
execution of the algorithm.

The authors of~\cite{BOF+16} proposed to exhibit the non-local behavior of the
states generated by Grover's algorithm by testing a generalization of Bell's
inequalities known as Mermin's inequalities, based on Mermin
polynomials~\cite{ACG+16,CGP+02}.

\begin{definition}[Mermin polynomials, \cite{ACG+16}]
\label{def:mermin_operator} 
  Let $\left(a_j\right)_{j \geq 1}$ and
  $\left(a_j'\right)_{j \geq 1}$ be  two families of one-qubit observables with 
  eigenvalues in $\{-1,+1\}$. The \emph{Mermin polynomial} $M_n$ is
  inductively defined by:
  \begin{equation} 
    \label{eq:mermin_def}
    \begin{cases}
    M_1=a_1\\
    \forall n \geq 2, \; M_n = \frac{1}{2} M_{n-1}\otimes( a_n + a_n' ) +
    \frac{1}{2} M_{n-1}'\otimes( a_n - a_n' )
    \end{cases}
  \end{equation}
  where, in (\ref{eq:mermin_def}), $M_k'$ is obtained from $M_k$ by 
    interchanging operators with and without the prime symbol.
\end{definition}

\begin{example}
For $n=2$, the Mermin polynomial is
 $M_2=\dfrac{1}{2}(a_1\otimes a_2+a_1\otimes a_2'+a_1'\otimes a_2-a_1'\otimes
 a_2)$. The operator $M_2$ is, up to a factor, the  CHSH operator used to prove
 Bell's Theorem \cite{CHSH69}.
\end{example}

Mermin's inequalities

\begin{equation}
\label{eq:mermin_ineq}
\expval{M_n}^{LR} \leq 1 \hspace{1cm}     \text{and} \hspace{1cm} 
\expval{M_n}^{QM} \leq 2^{\frac{n-1}{2}}
\end{equation}
\noindent respectively formalize that $\ev{M_n}$, the expectation value of $M_n$
is bounded by $1$ under the hypothesis $LR$ of local realism, while it is
bounded by $2^{\frac{n-1}2}$ in quantum mechanics ($QM$).

The violation of the first Mermin's inequality shows non-local behavior which is
only possible under the hypothesis of quantum mechanics and if the quantum state
is entangled. More precisely the maximum violation of Mermin's inequalities
occurs for $GHZ$-like states~\cite{Mer90,CGP+02,ACG+16}, \emph{i.e.} states
equivalent to $\ket{GHZ} = \frac{1}{\sqrt{2}}(\ket{0}^{\otimes n} + \ket{1}^
{\otimes n})$ by local transformations.

One of the advantages of Mermin's inequalities is that they can be
tested by a physical experiment.  Recently the violation of Mermin's
inequalities was tested for $n\leq 5$ qubits on a small quantum
computer~\cite{AL16}.



\section{Method and results} 
\label{sec:method_and_result}

Once two families $(a_j)_{1\leq j\leq n}$ and $(a'_j)_{1\leq j\leq n}$ of
observables are chosen, one can define the \emph{Mermin test function} $f_{M_n}$
by $f_{M_n}(\varphi) = \expval{M_n}{\varphi}$. It comes from
Inequalities~(\ref{eq:mermin_ineq}) that $f_{M_n}(\varphi) >1$ implies that
$\ket{\varphi}$ is entangled. We present in this section two approaches to
choose the parameters  $(a_j)_{1\leq j\leq n}$ and $(a'_j)_{1\leq j\leq n}$ of
$M_n$ to satisfy the previous inequality for some states generated by a quantum
algorithm.

The first approach evaluates each state that the algorithm goes through with the
same function $f_{M_n}$. This approach has the advantage of being light in
calculations ($(a_j)_ {1\leq j\leq n}$ and $(a'_j)_{1\leq j\leq n}$ are computed
only once), but the function $f_{M_n}$ is not a measure of entanglement, since
it is not invariant by local unitary transformations, \emph{i.e.}, we do not
have $f_{M_n}(\varphi) = f_{M_n}(g.\varphi)$ for all transformations $g \in
LU=U_2(\mathbb{C}) \times\dots\times U_2(\mathbb{C})$ and all quantum states
$\ket{\varphi}$. Here, for $g=(g_1,\dots,g_n)$ and $G = g_1 \otimes \ldots
\otimes g_n$, $\ket{g.\varphi} = G~\ket{\varphi}$.

The second approach is to apply a different function $f_{M_n}$ to each  state
$\ket{ \varphi}$ traversed by the algorithm, by finding values for $(a_j)$ and
$(a_j')$ such as $f_{M_n}(\varphi) >1$ for some of these states. This approach
was for example used in~\cite{BOF+16} and we use it in the
Section~\ref{sub:method_qft} to define a quantity $\mu(\varphi)$, invariant
under the group $LU$ of local unitary transformations, that could be considered
as a measure of entanglement (see Proposition~\ref{prop:measure_of_ent}).

\subsection{Grover's algorithm properties}

Hereafter we simplify the calculations by taking $S = \{{\bf x_0}\}$, \emph{i.e.
}, by considering that Grover's algorithm is only searching for a single element 
$\bf x_0$.
We want to show two properties: 
\begin{enumerate}[font=\bfseries]
  \item Grover's algorithm exhibits non-local behavior,
  \item the values of a well-chosen Mermin test function for 
    the successive states $\ket{\varphi_k}$ in Grover's algorithm increase 
    and then decrease, reaching their maximum at an integer $k_{max}$ in 
    $\{\floor{k_{opt}/{2}},\ceil{k_{opt}/{2}}\}$ (\emph{i.e.}, the chosen
    Mermin test function behaves like a measure of entanglement).
\end{enumerate}

The next section details the method we have followed to find a good Mermin
polynomial to establish these properties. 

  \subsubsection{Method} 
  \label{sub:method_grover}

The definition of Mermin polynomials provides degrees of freedom in the choice
of $(a_j)_{j \geq 1}$ and $(a_j')_{j \geq 1}$ (an infinite number of
parameters). We reduce that choice by imposing that the two sequences $(a_j)_{j
\geq 1}$ and $(a_j')_{j \geq 1}$ are constant, \emph{i.e.} $\forall j, a_j = a
\text{ and } a'_j = a'$. This restriction makes calculations lighter, and it
will be sufficient to achieve our objectives.

Let us denote by $a$ and $a'$ the two one-qubit observables that will be used to
write our Mermin polynomial. We have $a=\alpha X+\beta Y+\gamma Z$ and
$a'=\alpha'X+\beta'Y+\gamma'Z$ with the constraints
$|\alpha|^2+|\beta|^2+|\gamma|^2=1$ and $|\alpha'|^2+|\beta'|^2+|\gamma'|^2=1$
where $X=\begin{pmatrix} 0 & 1\\ 1 & 0
\end{pmatrix}$, $Y=\begin{pmatrix} 0 & -i\\ i & 0 \end{pmatrix}$ and
$Z=\begin{pmatrix} 1 & 0\\ 0 & -1 \end{pmatrix}$ denote the usual Pauli
matrices.

The degrees of freedom are the $6$ complex numbers $\alpha$, $\beta$,
$\delta$, $\alpha'$, $\beta'$ and $\delta'$ with the two normalization
constraints. Let $A = (\alpha, \beta, \delta, \alpha', \beta', \delta')$ be the
six-tuple of these variables.

In order to satisfy Property {\bf 2}, we search for a six-tuple of parameters
$A$ such that $f_{M_n}$ reaches its maximum for the state $\varphi_{k_{opt}/2}$.
We also would like this choice of $A$ to be independent of the states generated
by the algorithm. According to the geometric interpretation presented in
Section~\ref{sub:properties_of_states_in_the_grover_algorithm}, the state
$\varphi_{k_{opt}/2}$ should tend to the state $\ket{\varphi_{ent}} =
\frac{1}{K}(\ket{\bf x_0} + \ket{+}^{\otimes n})$ when $n$ tends to infinity 
(the approximation improves as $n$ increases). Moreover the state
$\ket{\varphi_{ent}}$ is a rank two tensor with an overlap between the states
$\ket{{\bf x_0}}$ and $\ket{+}^{\otimes n}$ which tends to $0$ as $n$ increases,
\emph{i.e.}, we expect the state $\ket{\varphi_{ent}}$ to behave like a
$GHZ$-like state when $n$ is big. This point is important because $GHZ$-like
states are the ones that maximize the violation of classical inequalities by
Mermin polynomials~ \cite{Mer90,CGP+02,ACG+16}. Therefore by choosing a tuple of
parameters $A$ maximizing $f_{M_n}(\varphi_{ent})$ one expects to satisfy
Properties {\bf 1.} and {\bf 2.}.

We use a random walk in $\mathbb{R}^6$ to maximize $f_{M_n}({\varphi_{ent}})$.
We operate the walk for a fixed number of steps, starting form an arbitrary
point. At each step, we chose a random direction, and move toward it to a new
point. If the value of $f_{M_n}({\varphi_{ent}})$ at that new point is higher
than at the previous one, then that point is the start point for the next step,
otherwise a new point is chosen.

Once the proper coefficient for $M_n$ found, we compute the values of each $f_
{M_n}({\varphi_{k}})$ for $k$ in $\llbracket0,k_{opt}\rrbracket$ to validate 
Properties {\bf 1.} and {\bf 2.}.

\begin{example}
When searching the state $\ket{0000}$, the highest value of $f_{M_4}({\varphi_
{ent}})$ obtained by this random walk was for $A = (-0.7, -0.3, -0.7, -0.5, 0.7,
-0.5)$. Then, $A$ is used to compute $M_4$, and then $f_{M_4}({\varphi_k}),
\forall k  \in  \llbracket0, k_{opt}\rrbracket$. 
\end{example}

\begin{remark}
\label{rem:BOFcalc}
Some comments should be done at this point to compare our approach with the work
of \cite{BOF+16}. First in \cite{BOF+16} all calculations are done using the
density matrices formalism instead of the vector/tensor approach we use here,
which is sufficient for computation involving pure states. Moreover in
\cite{BOF+16} the optimization is done at each step of the algorithm with
respect to the state computed by the algorithm, while we compute the parameters
only once with respect to a targeted state $\ket{\varphi_{ent}}$. Finally, as
mentioned at the beginning of Section~\ref{sub:method_grover}, we also restrict
ourselves to two operators $a$ and $a'$ and thus all optimizations are 
performed on six parameters instead of $6n$. This allows us to perform the
calculation for a larger number of qubits (up to $12$).
\end{remark}

  
  \subsubsection{Results} 
  \label{sub:results}

Thanks to our implementation of this method in SageMath, described in
Section~\ref{sec:implementation}, we obtain the values depicted in
Figure~\ref{fig:mermin_experiment}, for $n$ from 4 up to 12 qubits. The searched
element ${\bf x_0}$ is always the first element $\ket{\bf 0}$ of the canonical
basis, but other searched elements give similar results, by symmetry of the
problem.

\begin{figure}[htb!]
\begin{tikzpicture} 
\begin{axis}[extra y ticks={1}, 
    extra tick style={grid style={red},grid=major},
	xtick={0,5,...,100}, 
	width = 0.9\textwidth, height = 5cm,
	xlabel = {Number of iterations}, ylabel = {Mermin evaluation},
   	legend style ={ at={(1.03,1)}, 
        anchor=north west, draw=black, 
        fill=white,align=left},
    cycle list name=my cycle] 

	\addplot table[x=iteration, y=intricationValue, col sep=comma]{resources/data/grover_experience_results/4qbit.csv};
	\addlegendentry{4 qubits}
	\addplot table[x=iteration, y=intricationValue, col sep=comma]{resources/data/grover_experience_results/5qbit.csv};
	\addlegendentry{5 qubits}
	\addplot table[x=iteration, y=intricationValue, col sep=comma]{resources/data/grover_experience_results/6qbit.csv};
	\addlegendentry{6 qubits}
	\addplot table[x=iteration, y=intricationValue, col sep=comma]{resources/data/grover_experience_results/7qbit.csv};
	\addlegendentry{7 qubits}
	\addplot table[x=iteration, y=intricationValue, col sep=comma]{resources/data/grover_experience_results/8qbit.csv};
	\addlegendentry{8 qubits}
	\addplot table[x=iteration, y=intricationValue, col sep=comma]{resources/data/grover_experience_results/9qbit.csv};
	\addlegendentry{9 qubits}
	\addplot table[x=iteration, y=intricationValue, col sep=comma]{resources/data/grover_experience_results/10qbit.csv};
	\addlegendentry{10 qubits}
	\addplot table[x=iteration, y=intricationValue, col sep=comma]{resources/data/grover_experience_results/11qbit.csv};
	\addlegendentry{11 qubits}
	\addplot table[x=iteration, y=intricationValue, col sep=comma]{resources/data/grover_experience_results/12qbit.csv};
	\addlegendentry{12 qubits}
\end{axis} 
\end{tikzpicture}
\caption{Violation of Mermin's inequalities during Grover's algorithm execution 
for $4\leq n\leq 12$ qubits}
\label{fig:mermin_experiment}
\end{figure}
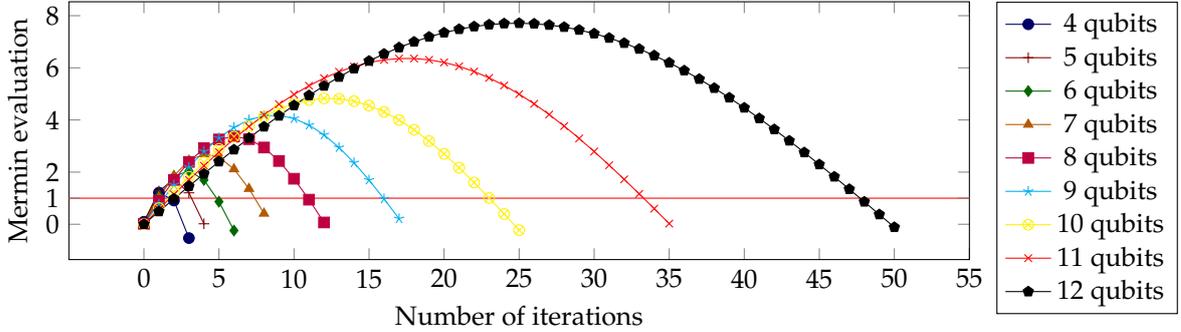

The lower bound for the number $n$ of qubits is set to 4 because for $n \leq 3$
the algorithm has no time to show any advantage, is not very reliable and
doesn't exhibit non locality. The upper bound is set to 12 because of
technological limitations: computations for 13 qubits or more become too
expensive.

We see that the two expected properties hold for all values of $n$: the
classical limit is violated and the Mermin evaluation increases  up to the
middle of the executions, and then decreases (the maximal values are given in
Figure~\ref{fig:k_opt_table}).

The curve for $n = 12$ in Figure~\ref{fig:mermin_experiment} should be compared
to the curve of Figure~1 of~\cite{RBM13} where the evolution of the GME of the
states generated by Grover's algorithm is given for $n=12$ qubits. In our
setting it is not a surprise that both curves are similar because in all of our
calculations the function $f_{M_n}$ is defined by the set of parameters that
maximizes its value for $\ket{\varphi_{ent}}$.

\begin{figure}[htb!]
$$\begin{array}{|l|c|c|c|c|c|c|c|c|c|}
\hline
n                          & 4  & 5  & 6  & 7  & 8  & 9  & 10 & 11 & 12 \\
\hline
\floor{k_{opt}/2}          & 1  & 1  & 2  & 3  & 5  & 8  & 12 & 17 & 24 \\
\hline
k_{max}                    & 1  & 2  & 3  & 4  & 6  & 9  & 12 & 18 & 25 \\
\hline
f_{M_n}(\varphi_{k_{max}}) &1.21&1.72&2.05&2.69&3.37&4.17&4.83&6.36&7.71\\
\hline
\end{array}$$
\caption{Maximums of $f_{M_n}(\varphi_k)$ for $4\leq n\leq 12$ qubits}
\label{fig:k_opt_table}
\end{figure}

\begin{remark}
In~\cite{BOF+16} similar curves (Figure~3) were obtained for  $n \in \{2,4,6,8
\}$ qubits showing the increasing-decreasing behavior, but the violation of
Mermin's inequalities - the non-locality - was not established for $n=6$ and
$n=8$, whereas it is obtained in our calculation.  Recall from Remark
\ref{rem:BOFcalc} that the calculation of \cite{BOF+16} is not exactly the same
as the one performed in this paper. The curves of \cite{BOF+16} are obtained by
maximizing $f_{M_n}(\varphi_k)$ at each step of the algorithm with a larger
number of parameters. Therefore as we obtain violation of Mermin's inequalities
in a restricted calculation, the authors of \cite{BOF+16} should also have
observed it. We suspect errors in the implementation of the calculation of
Equations (19) of \cite{BOF+16} as we have redone this calculation for $n=6$
based on Equations (18) and (20) of \cite{BOF+16}, and we have obtained
violation of Mermin inequalities shown in Figure~\ref{fig:grover_bof}.
\end{remark}

\begin{figure}[htb!]
\begin{tikzpicture} 
\begin{axis} [extra y ticks={1}, 
    extra tick style={grid style={red},grid=major},
	xtick={0,1,...,100}, 
	width = \textwidth, height = 5cm,
	xlabel = {Number of iterations}, ylabel = {Mermin evaluation},
   	legend style ={ at={(1.03,1)}, 
        anchor=north west, draw=black, 
        fill=white,align=left},
    cycle list name=my cycle] 

	\addplot table[x=iteration, y=intricationValue, col sep=comma]{resources/data/Grover_BOF.csv};
\end{axis} 
\end{tikzpicture}
\caption{Violation of Mermin inequalities during Grover's algorithm execution
for 6 qubits using \cite{BOF+16} method}
\label{fig:grover_bof}
\end{figure}
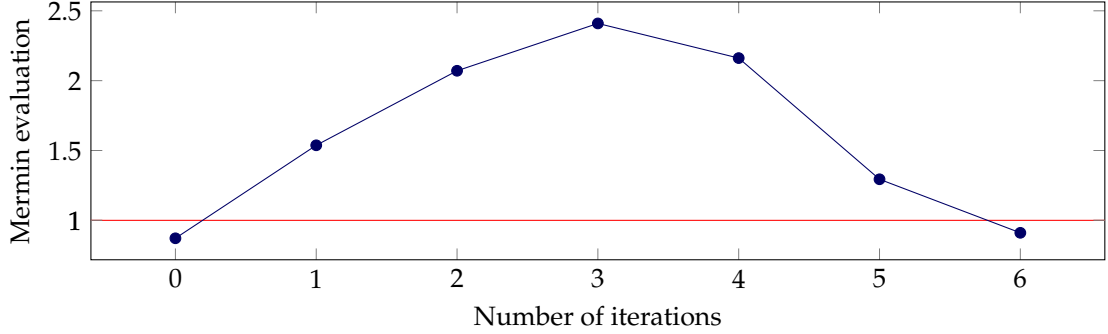

Figure~\ref{fig:mermin_comparison} provides another argument explaining why we
expected violation of Mermin's inequalities in Grover's algorithm when $n$
increases. From the geometric description of the algorithm (subsection
\ref{sub:properties_of_states_in_the_grover_algorithm}) one knows that the
quantum state $\ket{\varphi_{\ceil{k_{opt/2}}}}$ should be close to
$\ket{\varphi_{ent}}$ and thus behave like it with respect to Mermin's
polynomial. Despite the fact that $f_{M_n}(\varphi_{ent})$ does not
reach the theoretical upper bound that is obtained for states LOCC equivalent to
$\ket{GHZ_n}$, one sees that the difference between $f_{M_n}(\varphi_{ent})$
and the classical bound $1$ increases as a function of $n$.

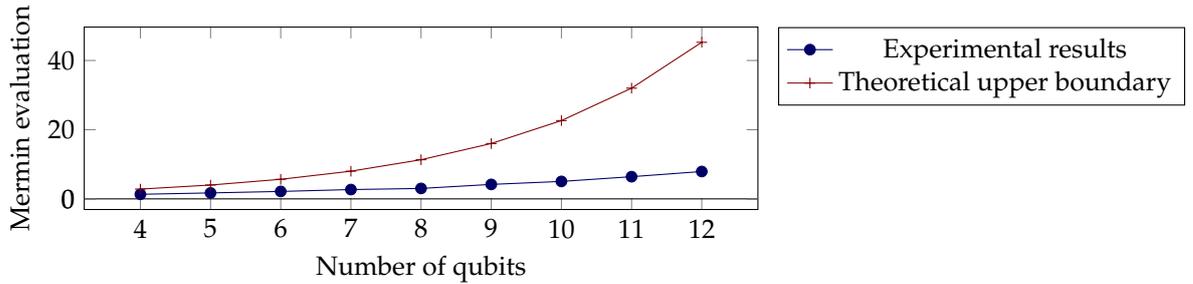
\begin{figure}[htb!]
\begin{tikzpicture} 
\begin{axis} [extra y ticks={0}, 
    extra tick style={grid style={black},grid=major},
	width = 0.7\textwidth, height = 4cm,
	xlabel = {Number of qubits}, ylabel = {Mermin evaluation},
   	legend style ={ at={(1.03,1)}, 
        anchor=north west, draw=black, 
        fill=white,align=left},
    cycle list name=my cycle] 

	\addplot table[x=n, y=exp, col sep=comma]{resources/data/max_exp-theo.csv};
	\addlegendentry{Experimental results}

	\addplot table[x=n, y=theo, col sep=comma]{resources/data/max_exp-theo.csv};
	\addlegendentry{Theoretical upper boundary}
\end{axis} 
\end{tikzpicture}
\caption{Comparison between experimental results and theoretical Mermin
boundary. The curve with points as dots corresponds to the evaluation of 
$f_{M_n}({\varphi_{ent}})$ and the curve with points as crosses corresponds to
the theoretical upper bound for the violation of Mermin's inequality defined by
$M_n$}
\label{fig:mermin_comparison}
\end{figure}

\subsection{Quantum Fourier Transform}

To exhibit non-locality behavior of states generated at each step of the Quantum
Fourier Transform we restrict ourselves to periodic four-qubit states for
the following reasons:
\begin{enumerate}
  \item as explained in subsection~\ref{sub:qft}, the QFT in Shor's algorithm 
    is applied to periodic states~\cite{NC10};
  \item as we will see in subsection~\ref{sub:resultqft} the four-qubit case
  is sufficient to obtain violation of Mermin's inequalities;
  \item we want to compare the present approach with a recent study
    of entanglement in Shor's algorithm in the four-qubit case, proposed by two 
    of the authors of the present paper~\cite{JH19}.
\end{enumerate}

\subsubsection{Method}
\label{sub:method_qft}

When we apply the QFT to periodic states we have no \emph{a priori} geometric
information about the type of states that will be generated. In fact it depends
on two initial parameters that define the periodic state $\ket{\varphi^{l,r}}$:
its shift $l$ and its period $r$. Therefore there are no reasons for restricting
the choice of parameters in the calculation of $f_{M_n}(\varphi^{l,r})$. For the
four-qubit case this implies that our optimization will be carried over the $24$
parameters defining $M_4$, hereafter denoted $\alpha_1$, \ldots, $\alpha_{24}$.

For $k\geq 0$, let $\ket{\varphi^{l,r}}_k$ denote the state reached after the
first $k$ gates in the QFT (Figure~\ref{fig:qft_circuit_4_qubits}) initialized
with the periodic state $\ket{\varphi^{l,r}}$ with shift $l$ and period $r$.

\ifdeftostring{\version}{arxiv}
  {
\begin{figure}[!ht]
  \begin{align*}
  \Qcircuit @C=1em @R=.7em {
  \lstick{ }
    &\gate{H}
        &\gate{R_2}
            &\gate{R_3}
                &\gate{R_4}
                    &\qw
                        &\qw
                            &\qw
                                &\qw
                                    &\qw
                                        &\qw
                                            &\multigate{3}{SWAP}
                                                &\qw \\
  \lstick{ }                     
    &\qw                          
        &\ctrl{-1}                      
            &\qw                            
                &\qw                            
                    &\gate{H}                
                        &\gate{R_2}              
                            &\gate{R_3}              
                                &\qw                     
                                    &\qw                     
                                        &\qw                      
                                            &\ghost{SWAP} 
                                                &\qw \\  
  \lstick{ }                     
    &\qw                          
        &\qw                            
            &\ctrl{-2}                      
                &\qw                            
                    &\qw                     
                        &\ctrl{-1}               
                            &\qw                     
                                &\gate{H}                
                                    &\gate{R_2}              
                                        &\qw                      
                                            &\ghost{SWAP} 
                                                &\qw \\
  \lstick{ }                     
    &\qw                          
        &\qw                            
            &\qw                            
                &\ctrl{-3}                      
                    &\qw                     
                        &\qw                     
                            &\ctrl{-2}               
                                &\qw                     
                                    &\ctrl{-1}               
                                        &\gate{H}                 
                                            &\ghost{SWAP} 
                                                &\qw \\
  \dstick{\quad\ket{\varphi_{0}^{l,r}}}
      &\dstick{\qquad\;\;\ket{\varphi_{1}^{l,r}}}
          &\dstick{\qquad\quad\ket{\varphi_{2}^{l,r}}}
              &\dstick{\qquad\quad\ket{\varphi_{3}^{l,r}}}
                  &\dstick{\qquad\quad\ket{\varphi_{4}^{l,r}}}
                      &\dstick{\qquad\quad\ket{\varphi_{5}^{l,r}}}
                          &\dstick{\qquad\quad\ket{\varphi_{6}^{l,r}}}
                              &\dstick{\qquad\quad\ket{\varphi_{7}^{l,r}}}
                                  &\dstick{\qquad\quad\ket{\varphi_{8}^{l,r}}}
                                      &\dstick{\qquad\quad\ket{\varphi_{9}^{l,r}}}
                                          &\dstick{\qquad\quad\ket{\varphi_{10}^{l,r}}}
                                              &\dstick{\qquad\qquad\quad\ket{\varphi_{11}^{l,r}}}
  }    
  \end{align*}
  \caption{Quantum circuit representation of the Quantum Fourier Transform for 
    a 4-qubit register}
  \label{fig:qft_circuit_4_qubits}
\end{figure}
  }
\ifdeftostring{\version}{qip}
  {
\begin{figure}[!ht]
  \begin{align*}
  \Qcircuit @C=1em @R=.7em {
  \lstick{ } \barrier[-1.2em]{3} 
    &\gate{H} \barrier[-1.3em]{3} 
        &\gate{R_2} \barrier[-1.3em]{3} 
            &\gate{R_3} \barrier[-1.3em]{3} 
                &\gate{R_4} \barrier[-1.2em]{3} 
                    &\qw \barrier[-1.3em]{3} 
                        &\qw \barrier[-1.3em]{3} 
                            &\qw \barrier[-1.2em]{3} 
                                &\qw \barrier[-1.3em]{3} 
                                    &\qw \barrier[-1.2em]{3} 
                                        &\qw \barrier[-2.6em]{3} 
                                            &\multigate{3}{SWAP} \barrier[-.4em]{3} 
                                                &\qw \\
  \lstick{ }                     
    &\qw                          
        &\ctrl{-1}                      
            &\qw                            
                &\qw                            
                    &\gate{H}                
                        &\gate{R_2}              
                            &\gate{R_3}              
                                &\qw                     
                                    &\qw                     
                                        &\qw                      
                                            &\ghost{SWAP} 
                                                &\qw \\  
  \lstick{ }                     
    &\qw                          
        &\qw                            
            &\ctrl{-2}                      
                &\qw                            
                    &\qw                     
                        &\ctrl{-1}               
                            &\qw                     
                                &\gate{H}                
                                    &\gate{R_2}              
                                        &\qw                      
                                            &\ghost{SWAP} 
                                                &\qw \\
  \lstick{ }                     
    &\qw                          
        &\qw                            
            &\qw                            
                &\ctrl{-3}                      
                    &\qw                     
                        &\qw                     
                            &\ctrl{-2}               
                                &\qw                     
                                    &\ctrl{-1}               
                                        &\gate{H}                 
                                            &\ghost{SWAP} 
                                                &\qw \\
  \dstick{\quad\ket{\varphi_{0}^{l,r}}}
      &\dstick{\qquad\;\;\ket{\varphi_{1}^{l,r}}}
          &\dstick{\qquad\quad\ket{\varphi_{2}^{l,r}}}
              &\dstick{\qquad\quad\ket{\varphi_{3}^{l,r}}}
                  &\dstick{\qquad\quad\ket{\varphi_{4}^{l,r}}}
                      &\dstick{\qquad\quad\ket{\varphi_{5}^{l,r}}}
                          &\dstick{\qquad\quad\ket{\varphi_{6}^{l,r}}}
                              &\dstick{\qquad\quad\ket{\varphi_{7}^{l,r}}}
                                  &\dstick{\qquad\quad\ket{\varphi_{8}^{l,r}}}
                                      &\dstick{\qquad\quad\ket{\varphi_{9}^{l,r}}}
                                          &\dstick{\qquad\quad\ket{\varphi_{10}^{l,r}}}
                                              &\dstick{\qquad\qquad\quad\ket{\varphi_{11}^{l,r}}}
  }    
  \end{align*}
  \caption{Quantum circuit representation of the Quantum Fourier Transform for 
    a 4-qubit register}
  \label{fig:qft_circuit_4_qubits}
\end{figure}
  }

We are interested by the evolution of the function $q$ defined for $k \geq 0$ by
\begin{equation*}
q(k)=\text{Max}_{\alpha_1,\dots,\alpha_{24}}f_{M_4}\left(\varphi^{l,r}_k\right).
\end{equation*}

In~\cite{JH19} two of the authors of the present paper have studied the
evolution of entanglement for periodic four-qubit states through QFT by
computing the absolute value of an algebraic invariant called the Cayley
hyperdeterminant and denoted by $\Delta_{2222}$. This polynomial of degree $24$
in $16$ variables is a well-known invariant in quantum information
theory~\cite{MW02,OS06} and was introduced as a possible measure of entanglement
for four-qubit case in~\cite{GW14}. We provide the definition of $\Delta_{2222}$
in Appendix~\ref{appendix:cayley_hyperdeterminant_D2222}.

Surprisingly, the two approaches, which are of different natures -- algebraic
definition for the hyperdeterminant and an operator-based construction for
Mermin evaluation -- would sometimes present similar behavior (see
Figure~\ref{fig:qft_hyperdet_comparison}).

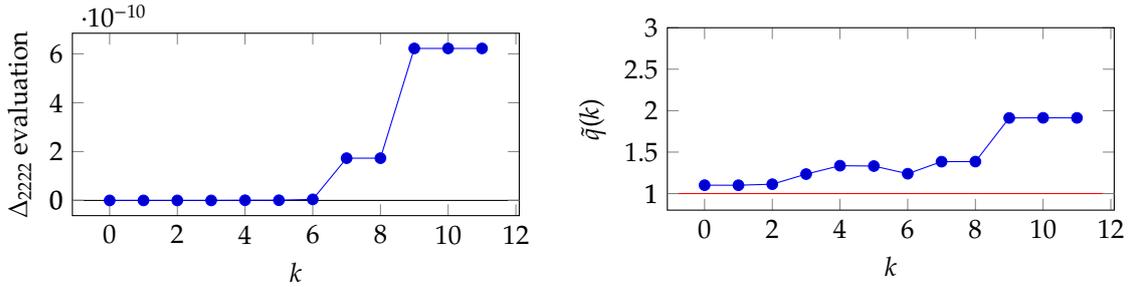
\begin{figure}[!ht]
  \begin{subfigure}{0.5\textwidth} \begin{tikzpicture} 
\begin{axis}[extra y ticks={0}, 
    extra tick style={grid style={black},grid=major},
    width = \textwidth, height = 4cm, 
    xlabel = {$k$}, ylabel = {$\Delta_{2222}$ evaluation}
]
  \addplot table[x=iteration, y=intricationValue, col sep=comma]{resources/data/period_9-1_hyperdet.csv};
\end{axis} 
\end{tikzpicture}
\end{subfigure}
\begin{subfigure}{0.5\textwidth} \begin{tikzpicture} 
\begin{axis}[qftExp] 
  \addplot table[x=iteration, y=intricationValue, col sep=comma]{resources/data/qft_experience_results/period_9-1.csv};
\end{axis} 
\end{tikzpicture}
\end{subfigure}
  \caption{Comparison of entanglement evaluation through the QFT for periodic 
    state $(l,r)=(9,1)$ using the measures given by the hyperdeterminant and the
   Mermin evaluation}
  \label{fig:qft_hyperdet_comparison}
\end{figure}

In \cite{JH19} it was observed that  the evolution of entanglement for
four-qubit periodic states through  QFT shows three different behaviors with
respect to $\Delta_{2222}$.
\begin{itemize}
  \item Case 1. The polynomial $\Delta_{2222}$ is nonzero when evaluated on 
    $\ket{\varphi^{l,r}}$ and does not vanish during the transformation. In 
    terms of four-qubit classification~\cite{VDDV02} it means that the
    transformed states remain in the so-called $G_{abcd}$ class. This happens
    for $(l,r) \in \{(1,3),(2,3)\}$.
  \item Case 2. The polynomial $\Delta_{2222}$ is zero for the periodic state 
    $\ket{\varphi^{l,r}}$ and is nonzero during the QFT. This happens for $(l,r)
    \in \{(0,3),(0,5),(2,1),(3,1),(3,3),(4,1),(4,3),(5,1),(5,3),(6,1),(6,3),$
    $(7,1),(9,1)(10,1),(11,1),(12,1)\}$.
  \item Case 3. The polynomial $\Delta_{2222}$ is zero for the periodic state 
    $\ket{\varphi^{l,r}}$ and it remains equal to zero all along the
    QFT for all the other $(l,r)$ configurations (in $\llbracket 0, N-1
    \rrbracket \times \llbracket 1, N-r \rrbracket$).
\end{itemize}

Before presenting the results let us point out that now our calculation
can be considered as a measure of entanglement, because the calculated quantity 
is invariant under local unitary transformations, \emph{i.e.} under the group 
$LU = U_2(\mathbb{C})^n$.

\begin{proposition}
\label{prop:measure_of_ent}
Let $\ket{\varphi}\in {(\mathbb{C}^2)}^{\otimes n}$ be a $n$-qubit state and $(a_i
)$ and $(a_i')$ be families of one-qubit observables that define a Mermin
polynomial $M_n$ according to Definition~\ref{def:mermin_operator}. Let
\begin{equation}
  \mu(\varphi)=\text{Max}_{a_i,a_{i'}} \expval{M_n}{\varphi}.
\end{equation}
Then $\mu(\varphi)$ is LU-invariant.
\end{proposition}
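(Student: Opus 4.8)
The plan is to show that replacing $\ket{\varphi}$ by $\ket{g.\varphi}=G\ket{\varphi}$ amounts, inside the expectation value, to conjugating the defining observables by the local factors $g_j$, and that this conjugation is a bijection of the optimization domain, so that the maximum is unchanged. The structural fact that makes this work is that $M_n$ is \emph{multilinear} in its observables: unfolding the recursion (\ref{eq:mermin_def}) expresses $M_n$ as a fixed linear combination with scalar coefficients independent of the $a_j,a_j'$,
\[
M_n=\sum_{c}\lambda_c\, b_1^{(c)}\otimes\cdots\otimes b_n^{(c)},\qquad b_j^{(c)}\in\{a_j,a_j'\},
\]
where in each term the $j$-th factor is either $a_j$ or $a_j'$, and $M_n'$ is the same combination with every prime toggled (as the Example for $n=2$ illustrates).

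Next, for $g=(g_1,\dots,g_n)\in LU$ and $G=g_1\otimes\cdots\otimes g_n$, I would use $\expval{M_n}{g.\varphi}=\expval{G^\dagger M_n G}{\varphi}$ together with the fact that conjugation distributes across tensor products, $G^\dagger\,(b_1\otimes\cdots\otimes b_n)\,G=(g_1^\dagger b_1 g_1)\otimes\cdots\otimes(g_n^\dagger b_n g_n)$. By the multilinear form above, this shows that $G^\dagger M_n G$ is exactly the Mermin polynomial built from the conjugated observables $\tilde a_j=g_j^\dagger a_j g_j$ and $\tilde a_j'=g_j^\dagger a_j' g_j$. The one delicate point, which I expect to be the main obstacle, is that the primed/unprimed bookkeeping must survive this substitution: since $M_n$ and $M_n'$ are assembled from the same symbols with primes interchanged, a short induction on $n$ from (\ref{eq:mermin_def}) is needed to confirm that replacing every $a_j$ by $\tilde a_j$ and every $a_j'$ by $\tilde a_j'$ commutes with the whole recursion. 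Everything else is formal.

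Finally I would observe that unitary conjugation preserves the spectrum, so each $\tilde a_j,\tilde a_j'$ is again a one-qubit observable with eigenvalues in $\{-1,+1\}$, hence an admissible choice in the definition of $M_n$ (geometrically, $a\mapsto g_j^\dagger a g_j$ acts as a rotation of the Bloch sphere). Since this map is a bijection of the admissible set onto itself, with inverse $a\mapsto g_j a g_j^\dagger$, the family $(\tilde a_j,\tilde a_j')_j$ ranges over all admissible families as $(a_j,a_j')_j$ does. Therefore
\[
\mu(g.\varphi)=\text{Max}_{a_i,a_i'}\expval{G^\dagger M_n G}{\varphi}=\text{Max}_{\tilde a_i,\tilde a_i'}\expval{M_n(\tilde a,\tilde a')}{\varphi}=\mu(\varphi),
\]
which is the asserted $LU$-invariance. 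In short, once the multilinear expansion is in hand the only real work is the inductive verification that $G^\dagger M_n G$ is $M_n$ evaluated at the conjugated observables; invariance then follows from the bijectivity of conjugation on the parameter domain.
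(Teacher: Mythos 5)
Your proposal is correct and follows essentially the same route as the paper's proof: both rest on the observation that conjugating $M_n$ by a local unitary $G=g_1\otimes\cdots\otimes g_n$ yields the Mermin polynomial built from the conjugated observables $g_j^\dagger a_j g_j$, which remain admissible since conjugation preserves the spectrum $\{-1,1\}$, so the optimization domain is mapped bijectively onto itself. The paper phrases the conclusion as two inequalities $\mu(\varphi)\leq\mu(\psi)$ and $\mu(\varphi)\geq\mu(\psi)$ obtained from a maximizing parameter tuple rather than invoking the bijection directly, but this is only a cosmetic difference.
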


\proof First one recalls that a one-qubit observable $A$ such that
$Sp(A)=\{-1,1\}$ can always be written as $A=\alpha X+\beta Y+\gamma Z$ with
$\alpha,\beta,\gamma\in \mathbb{R}$ and $\alpha^2+\beta^2+\gamma^2=1$. For the
action  $g.A= g^\dagger Ag$ on $A$ by conjugation with a unitary matrix $g\in
U_2(\mathbb{C})$, one has  $g.A =\tilde{A}=\tilde{\alpha} X+\tilde{\beta}
Y+\tilde{\gamma} Z$ with $\tilde{\alpha},\tilde{\beta},\tilde{\gamma}$ reals
such that $\tilde{\alpha}^2+\tilde{\beta}^2+\tilde{\gamma}^2=1$. Indeed
$\tilde{A}$ is also a one-qubit observable such that $Sp(\tilde{A})=\{-1,1\}$.

Let us denote by $\lambda = (\alpha_1,\beta_1,\gamma_1,\alpha'_1,\beta'_1,
\gamma'_1,\dots,\alpha_n,\beta_n,\gamma_n,\alpha'_n,\beta'_n,\gamma'_n)$ a tuple 
of $6n$ parameters that define a Mermin polynomial $M_n(\lambda)$. Then 
\begin{equation*}
  \mu(\varphi)=\text{Max}_{\lambda \in \mathbb{R}^{6n},
    \alpha_i^2+\beta_i^2+\gamma_i^2=1,{\alpha'}_i^2+{\beta'}_i^2+
    {\gamma'}_i^2=1}\expval{M_n(\lambda)}{\varphi}
\end{equation*} 
exists, because it is the maximum of a degree $n$ polynomial in (at most) $6n$
variables under the constraints $\alpha_i^2+\beta_i^2+\gamma_i^2=1$ and ${\alpha
'}_i^2+{\beta'}_i^2+{\gamma'}_i^2=1$. Let us denote by $\lambda'$ a tuple of
parameters that maximizes $\expval{M_n(\lambda)}{\varphi}$, \emph{i.e.},
\begin{equation*}
  \mu(\varphi) = \expval{M_n(\lambda')}{\varphi}.
\end{equation*}

Let $\ket{\psi}$ be a $n$-qubit state $LU$-equivalent to $\ket{\varphi}$. Thus
there exists $g=(g_1,\dots,g_n)\in LU$ such that $\ket{\psi} = \ket{g.\varphi} =
G~\ket{\varphi}$ with $G = g_1 \otimes \ldots \otimes g_n$. Then $\expval{
M_n(\lambda')}{\varphi} = \left(\bra{\varphi}~G^\dagger\right)~G~M_n
(\lambda')~G^\dagger~\left(G~\ket{\varphi}\right) = 
\expval{M_n(\lambda'')}{\psi}$ for some tuple of parameters $\lambda''$. 
Therefore
\begin{equation*}
  \mu(\varphi) \leq \mu(\psi).
\end{equation*}
But $\ket{\varphi}=G^{\dagger}\ket{\psi}$ also holds, so a similar reasoning
provides the inequality $\mu(\varphi) \geq
\mu(\psi)$ and thus the equality. \qed 

In the next section we plot and analyze different curves of the experimental
approximation $\tilde{q}$ of $q$ in the four-qubit case for different choices of 
$(l,r)$.

\subsubsection{Results}
\label{sub:resultqft}

Curves of the experimental approximation $\tilde{q}(k)$ of $q(k)$ are shown on
Figures~\ref{fig:qft_res_hyperdet_not_null},
~\ref{fig:qft_res_hyperdet_not_null_after} and~\ref{fig:qft_res_hyperdet_null},
for $k \in \llbracket 0,11\rrbracket$ and for different choices of shift $l$ and
period $r$, respectively in Cases 1, 2 and 3.

\begin{figure}[!th]
  \begin{subfigure}{0.5\textwidth} \begin{tikzpicture} 
\begin{axis}[qftExp] 
  \addplot table[x=iteration, y=intricationValue, col sep=comma]{resources/data/qft_experience_results/period_1-3.csv};
\end{axis} 
\end{tikzpicture}
\caption{$(l,r)=(1,3)$}
\label{fig:qft-l1r3}
\end{subfigure}
\begin{subfigure}{0.5\textwidth} \begin{tikzpicture} 
\begin{axis}[qftExp] 
  \addplot table[x=iteration, y=intricationValue, col sep=comma]{resources/data/qft_experience_results/period_2-3.csv};
\end{axis} 
\end{tikzpicture}
\caption{$(l,r)=(2,3)$}
\label{fig:qft-l2r3}
\end{subfigure}
  \caption{Evolution of the maximal values of Mermin operators in the QFT steps. 
    Examples of input $\ket{\varphi^{(l,r)}}$ in Case 1 from~\cite{JH19}}
  \label{fig:qft_res_hyperdet_not_null}
\end{figure}

Let us start with general comments.
\begin{itemize}
  \item All examples in Figures~\ref{fig:qft_res_hyperdet_not_null},
    \ref{fig:qft_res_hyperdet_not_null_after} and 
    \ref{fig:qft_res_hyperdet_null} present violations of Mermin's inequality,
    and the maximal violation evolves during the algorithm.
  \item The intervals $\llbracket 0,1\rrbracket, \llbracket 4,5\rrbracket,
    \llbracket 7,8\rrbracket$  and $\llbracket 9,11\rrbracket$ for $k$
    correspond to gates (Hadamard, SWAP) of the QFT that do not modify
    entanglement.That explains why the function is constant on those intervals, 
    as it was already the case for the curves $k \mapsto |\Delta_{2222}(\varphi^
    {l,r}_k)|$ in~\cite{JH19}. 
  \item States corresponding to Cases 1 and 2 of~\cite{JH19} violate the
    classical bound during the execution of the QFT. Only some states
    corresponding to Case 3 produce constant curves with some of them equal to
    the classical bound (not drawn). It is for instance the case for $(l,r)=(2,4
    )$ which is a separable state that remains separable during the algorithm.
\end{itemize}

\begin{figure}[!th]
  \begin{subfigure}{0.5\textwidth} \begin{tikzpicture}
\begin{axis}[qftExp] 
  \addplot table[x=iteration, y=intricationValue, col sep=comma]{resources/data/qft_experience_results/period_0-3.csv};
\end{axis}
\end{tikzpicture} 
\caption{$(l,r)=(0,3)$}
\label{fig:qft-l0r3}
\end{subfigure}
\begin{subfigure}{0.5\textwidth} \begin{tikzpicture} 
\begin{axis}[qftExp] 
  \addplot table[x=iteration, y=intricationValue, col sep=comma]{resources/data/qft_experience_results/period_5-3.csv};
\end{axis} 
\end{tikzpicture} 
\caption{$(l,r)=(5,3)$}
\label{fig:qft-l5r3}
\end{subfigure}
\begin{subfigure}{0.5\textwidth} \begin{tikzpicture} 
\begin{axis}[qftExp] 
  \addplot table[x=iteration, y=intricationValue, col sep=comma]{resources/data/qft_experience_results/period_11-1.csv};
\end{axis} 
\end{tikzpicture} 
\caption{$(l,r)=(11,1)$}
\label{fig:qft-l11r1}
\end{subfigure}
\begin{subfigure}{0.5\textwidth} \begin{tikzpicture} 
\begin{axis}[qftExp] 
  \addplot table[x=iteration, y=intricationValue, col sep=comma]{resources/data/qft_experience_results/period_9-1.csv};
\end{axis} 
\end{tikzpicture}
\caption{$(l,r)=(9,1)$}
\label{fig:qft-l9r1}
\end{subfigure}
  \caption{Evolution of the maximal values of Mermin operators in the QFT steps. 
    Examples of input $\ket{\varphi^{(l,r)}}$ in Case 2 from~\cite{JH19}}
  \label{fig:qft_res_hyperdet_not_null_after}
\end{figure}

\begin{figure}[!th]
  \begin{subfigure}{0.5\textwidth} \begin{tikzpicture} 
\begin{axis}[qftExp] 
  \addplot table[x=iteration, y=intricationValue, col sep=comma]{resources/data/qft_experience_results/period_2-2.csv};
\end{axis} 
\end{tikzpicture} 
\caption{$(l,r)=(2,2)$}
\label{fig:qft-l2r2}
\end{subfigure}
\begin{subfigure}{0.5\textwidth} \begin{tikzpicture} 
\begin{axis}[qftExp] 
  \addplot table[x=iteration, y=intricationValue, col sep=comma]{resources/data/qft_experience_results/period_0-11.csv};
\end{axis}
\end{tikzpicture} 
\caption{$(l,r)=(0,11)$}
\label{fig:qft-l0r11}
\end{subfigure}
\begin{subfigure}{0.5\textwidth} \begin{tikzpicture} 
\begin{axis}[qftExp] 
  \addplot table[x=iteration, y=intricationValue, col sep=comma]{resources/data/qft_experience_results/period_0-15.csv};
\end{axis}
\end{tikzpicture} 
\caption{$(l,r)=(0,15)$}
\label{fig:qft-l0r15}
\end{subfigure}
\begin{subfigure}{0.5\textwidth} \begin{tikzpicture} 
\begin{axis}[qftExp] 
  \addplot table[x=iteration, y=intricationValue, col sep=comma]{resources/data/qft_experience_results/period_1-1.csv};
\end{axis} 
\end{tikzpicture} 
\caption{$(l,r)=(1,1)$}
\label{fig:qft-l1r1}
\end{subfigure}
  \caption{Evolution of the maximal values of Mermin operators in the QFT steps.
    Examples of input $\ket{\varphi^{(l,r)}}$ in Case 3 from~\cite{JH19}}
  \label{fig:qft_res_hyperdet_null}
\end{figure}

It would be interesting to propose a finer analysis of the evolution of these
curves with respect to the change of entanglement classes induced by the
algorithm. For instance, if one considers the periodic states
$\ket{\varphi^{l,r}}$ for $(l,r)=(2,2)$ and $(l,r)=(0,11)$
(Figures~\ref{fig:qft-l2r2} and~\ref{fig:qft-l0r11}), it is shown in~\cite{JH19}
that these two states are SLOCC equivalent (\emph{i.e.} can be inter-converted
by a reversible local operation), but their evolution during the QFT is quite
different. The value of $\tilde{q}(k)$ fluctuates around $1.10$ for
$(l,r)=(2,2)$ but for $(l,r)=(0,11)$ that value of $\tilde{q}(k)$ is in the
interval $[1.65,2.18 ]$. It was also shown in~\cite{JH19} that the states
$\ket{\varphi^{2,2}_{11}}$ and $\ket{\varphi^{0,11}_{11}}$ are not SLOCC
equivalent.

Similarly the cases $(l,r)=(0,15)$ and $(1,1)$
(Figure~\ref{fig:qft_res_hyperdet_null} bottom) correspond to two states SLOCC
equivalent to $\ket{GHZ_4}$ at the beginning of the algorithm. It is clear for
$(l,r)=(0,15)$ because $\ket{\varphi^{0,15}} = \ket{GHZ_4}$ and $\tilde{q}(k)$
reaches the maximal possible value at the beginning of the algorithm. The
maximal violation of Mermin inequality for four qubits is $2\sqrt{2}\approx
2.81$ ($2^{\frac{n-1}{2}}$ for $n=4$), but this value is nowhere to be
approached for $(l,r)=(1,1)$ where the value of $\tilde{q}(k)$ is close to $1$
at all steps of the run. In fact the state

\begin{equation}
 \ket{\varphi^{1,1}}=\ket{+{+{+{+}}}} - \dfrac{1}{\sqrt{15}}\ket{0000}
\end{equation}

\noindent is a state on the secant line joining $\ket{+{+{+{+}}}}$ and 
$\ket{0000}$, as described in
subsection~\ref{sub:properties_of_states_in_the_grover_algorithm}. This state is
indeed SLOCC equivalent to $\ket{GHZ_4}$ but it is closer to a separable state
if one considers the GME. That could explain the difference of observed values
in Figure~\ref{fig:qft_res_hyperdet_null}.



\section{Implementation} 
\label{sec:implementation}

This section explains the code developed for this article and relates it to the
notations from Section~\ref{sec:background}. This code can be found at
\quantcert. It uses the open-source mathematics software system
SageMath\footnote{\url{http://www.sagemath.org}} based on Python. The code is a
module named \py{mermin_eval}, and usage examples can be found in the GitHub
repository. Note that all the results of this article have been double checked,
by first being obtained on Maple\footnote{ \url{https://www.maplesoft.com/}} and
then only being generalized on SageMath.

The code is provided and presented for several reasons: so the readers can see
how we obtained the results presented in Section~\ref{sub:results}, and they can
reproduce our computations by running the code. But the code can also be
extended to other evaluation methods of Grover algorithm, or adapted to other
quantum algorithms, since it is structured in several well-documented functions.

This section is divided in two parts: we first explain the code used for
Grover's algorithm in Section~\ref{sub:grover_s_algorithm_implementation}, and
then the code used for the Quantum Fourier Transform in
Section~\ref{sub:quantum_fourier_transform_implementation}.

\subsection{Grover's algorithm implementation} 
\label{sub:grover_s_algorithm_implementation}

For Grover's algorithm, the main function \py{grover} is reproduced in
Listing~\ref{lst:grover_func}. The parameter \py{target_state_vector} is the
searched state $\ket{\varphi_0}$. The function first executes an implementation
\py{grover_run} of the Grover algorithm, detailed in
Section~\ref{sub:grover_execution}, and stores in the list \py{end_loop_states}
the states after each iteration of the loop $\mathcal{L}$. Then but
independently, a call to the function \py{grover_optimize}
(Section~\ref{sub:grover_optimization}) optimizes Mermin operator. The result is
stored in the matrix \py{M_opt}. Finally both these results are used to evaluate
entanglement after each iteration of $\mathcal{L}$ with a call to the function
\py{grover_evaluate} (Section~\ref{sub:grover_evaluation}), also responsible of
printing the evaluations at each step.

\begin{listing}
\begin{minted}{python}
def grover(target_state_vector):
  end_loop_states = grover_run(target_state_vector)

  M_opt = grover_optimize(target_state_vector)

  grover_evaluate(end_loop_states, M_opt)
\end{minted}
\caption{Main function for Grover's entanglement study}
\label{lst:grover_func}
\end{listing}

  \subsubsection{Execution} 
  \label{sub:grover_execution}

The function \py{grover_run} given in Listing~\ref{lst:grover_run} takes as
input the target state and returns a list of states composed of the states at
the end of each loop iteration. 

\begin{listing}
\begin{minted}{python}
def grover_run(target_state_vector):
  layers, k_opt = grover_layers_kopt(target_state_vector)
  N = len(target_state_vector)
  V0 = vector([0, 1] + [0]*(2*N-2))

  states = run(layers, V0)
  end_loop_states = states[0]
  for i in range(k_opt):
    end_loop_states.append(states[2*i+1])

  return end_loop_states
\end{minted}
\caption{Function running Grover's algorithm}
\label{lst:grover_run}
\end{listing}

This function operates in two steps. The first step is to build the circuit for
Grover algorithm, which is achieved by the function \py{grover_layers_kopt}. The
circuit format is a list of layers: each layer being a list of matrices (all the
operations performed at a given time) and each matrix representing an operation
performed on one or more wires. For example, if \py{H} is the Hadamard matrix,
\py{I2} and \py{I4} are the identity matrix (in dimensions 2 and 4) and \py{X}
is the first Pauli operator, then the circuit in 
Figure~\ref{fig:circ_py_example} is represented by the list \py{[[H,I4],
[X,X,I2], [I4,H], [H,H,H]]}.

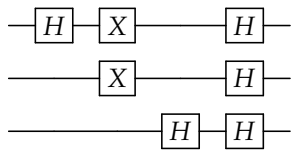
\begin{figure}[hbt!]
$$
\Qcircuit @C=1em @R=.7em {
 & \gate{H} & \gate{X}  & \qw       & \gate{H}  & \qw\\
 & \qw      & \gate{X}  & \qw       & \gate{H}  & \qw\\
 & \qw      & \qw       & \gate{H}  & \gate{H}  & \qw\\
}
$$
\caption{Example for the circuit formalism in \py{grover_ent}}
\label{fig:circ_py_example}
\end{figure}

The next step is to run the circuit, this is achieved by \py{run} which returns
the list of the states after each layer. The function \py{run} takes as input
the circuit (\py{layers}) and the initial state (\py{V0}). This function both
allows us to separate syntax and semantics, and is reusable in any future
context involving circuits. 

The \py{for}-loop then filters out all the intermediate states which are not at
the end of a loop iteration. For example, if we consider Grover's algorithm on
three qubits shown in Figure~\ref{fig:loop_count}, we would have the first state 
$\ket{\varphi_0}$, and the states $\ket{\varphi_3}$ and $\ket{\varphi_5}$ in
\py{end_loop_states}.

\begin{figure}[hbt!]
$$
\Qcircuit @C=1em @R=.7em {
\lstick{\ket{0}} & \multigate{3}{H^{\otimes n+1}} 
  & \multigate{3}{U_f} & \multigate{2}{\mathcal{D}} 
  & \multigate{3}{U_f} & \multigate{2}{\mathcal{D}} 
  & \meter & \cw \\
\lstick{\ket{0}} & \ghost{H^{\otimes n+1}} 
  & \ghost{U_f} & \ghost{\mathcal{D}}
  & \ghost{U_f} & \ghost{\mathcal{D}} 
  & \qw & \qw \\
\lstick{\ket{0}} & \ghost{H^{\otimes n+1}} 
  & \ghost{U_f} & \ghost{\mathcal{D}}
  & \ghost{U_f} & \ghost{\mathcal{D}}
  & \qw & \qw \\
\lstick{\ket{1}} & \ghost{H^{\otimes n+1}} 
  & \ghost{U_f} & \qw
  & \ghost{U_f} & \qw
  & \qw & \qw \\
\rstick{\ket{\varphi_{0}}}&&
\lstick{\scriptscriptstyle\ket{\varphi_{1}}}&
\lstick{\scriptscriptstyle\ket{\varphi_{2}}}&\lstick{\ket{\varphi_{3}}}&
\lstick{\scriptscriptstyle\ket{\varphi_{4}}}&\lstick{\ket{\varphi_{5}}}
\gategroup{1}{3}{4}{4}{.7em}{--}
\gategroup{1}{5}{4}{6}{.7em}{--}
}
$$
\caption{End loop counting example}
\label{fig:loop_count}
\end{figure}
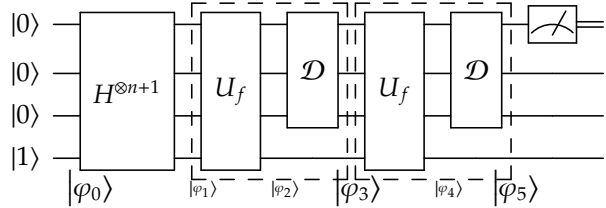

This implementation of the simulation of Grover's algorithm has its limits
though. It is computationally heavy to multiply matrices beyond a certain number
of qubits. To push it a little further, we used another implementation for
Grover's algorithm, less versatile but more efficient. This method is presented
in Listing~\ref{lst:grover_run_arti}. In this case, two important differences
are the fact that there is no more use for the ancilla qubit (the last wire in
the circuit definition of Grover's algorithm, see Figure~\ref{fig:grover_circ}),
which divides by two the number of elements in a state vector, and the fact that
almost no matrix multiplication is used. Indeed, the loop is now handled by
functions operating directly on the state vector. The first function is
\py{oracle_artificial}, and it only flips the correct coefficient in the running
state (this is the behavior explained in Section~\ref{sub:grover_algorithm}).
The second function  \py{diffusion_artificial} performs the inversion about the
mean.

\begin{listing}
\begin{minted}{python}
def grover_run(target_state_vector):
  N = len(target_state_vector)
  n = log(N)/log(2)
  k_opt = round((pi/4)*sqrt(N))
  H = matrix(field, [[1, 1], 
                    [1, -1]])/sqrt(2)
  hadamard_layer = kronecker_power(H, n)

  V0 = vector([1]+[0]*(N-1))

  V = hadamard_layer * V0
  end_loop_states = [V]

  for k in range(k_opt):
    V = oracle_artificial(target_state_vector, V)
    V = diffusion_artificial(V)
    end_loop_states.append(V)

  return end_loop_states
\end{minted}
\caption{Optimized implementation of Grover's algorithm}
\label{lst:grover_run_arti}
\end{listing} 



  \subsubsection{Optimization} 
  \label{sub:grover_optimization}

The \py{grover_optimize} function shown in Listing~\ref{lst:grover_opti}
computes an approximation of an optimal Mermin operator, as explained in
Section~\ref{sub:method_grover}. The Mermin operator $M_n$ is an implicit
function of $(\alpha, \beta, \delta,  \alpha', \beta', \delta')$, here
implemented as \py{(a,b,c,m,p,q)}. Because of this, optimizing the Mermin
operator is finding the optimal $(\alpha, \beta,  \delta, \alpha', \beta',
\delta')$ for our Mermin evaluation.

\begin{listing}
\begin{minted}{python}
def grover_optimize(target_state):
  n = log(len(target_state))/log(2)
  plus = vector([1,1])/sqrt(2)
  plus_n = kronecker_power(plus, n)
  phi = (target_state + plus_n).normalized()

  def M_phi(a,b,c,m,p,q):
    return M_eval(a,b,c,m,p,q, phi)

  (a,b,c,m,p,q),v = optimize(M_phi, (1,1,1,1,1,1), 5, 10**(-2), 10**2)

  return M_from_coef(n,a,b,c,m,p,q)
\end{minted}
\caption{Optimization function for Grover's algorithm}
\label{lst:grover_opti}
\end{listing}

To optimize the Mermin operator, first the state $\ket{\varphi_{ent}} =
(\ket{\bf x_0} + \ket{+}^{\otimes n})/K$ (with $K$ the normalizing factor) is
computed and stored in \py{phi}, then $f_{M_n}$ represented by \py{M_eval} is
used to define $f_{M_n}(\ket{ \varphi_{ent}})$ as \py{M_phi}. Note that in the
mathematical notations, $f_{M_n}(\ket{ \varphi_{ent}})$ is an implicit function
of $(\alpha, \beta, \delta, \alpha', \beta', \delta')$. This implicit relation
is made explicit as \py{M_phi} is a function of \py{(a,b,c,m,p,q)}.

The \py{optimize} function takes as input a function (here \py{M_phi}), a first
point to start the optimization from (here \py{(1,1,1,1,1,1)}), the step sizes
bounds (here \py{step_init}=$5$ and \py{step_min}$=10^{-2}$) and a maximum
number of iterations on a single step (here \py{iter_max}$=10^2$).

The optimization function proceeds with a random walk. It iterates until it
finds a local maximum (for all points $p$ in a neighborhood around the point
found $p_{opt}$, their evaluation by the function given as the first parameter
is less than the evaluation of the point found $f(p) \leq f(p_{opt})$). To find
this optimum, the process starts from an arbitrary point (given as an argument)
and at each step, an exploration of the space is done around the current point
until the evaluation on the argument function increases. If an increase cannot
be found before \py{iter_max}, the step size is reduced, otherwise, the same
step is repeated with the same step size, the function ends with the step size
reaches \py{step_min}.

\begin{remark}
This optimization can be expensive, so to speed up the calculation, a
memoization step is hidden here: if \py{(a,b,c,m,p,q)} has already been computed
for \py{target_state}, this result has been stored on disk at this point and is
now loaded.
\end{remark}


  \subsubsection{Evaluation} 
  \label{sub:grover_evaluation}

The function \py{grover_evaluate} shown in the Listing~\ref{lst:grover_eval} is
the simplest of the three: it computes $f_{M_n}(\ket{\varphi_k}) = 
\expval{M_n}{\varphi_k}$ for each $\ket{\varphi_k}$ in the \py{end_loop_states}
list with $M_n$ here being \py{M_opt}, and prints them.

\begin{listing}
\begin{minted}{python}
def grover_evaluate(end_loop_states, M_opt):
  for state in end_loop_states:
    print((state.transpose().conjugate()*M_opt*state))
\end{minted}
\caption{Evaluation function for Grover's algorithm}
\label{lst:grover_eval}
\end{listing}

To overview the code as a whole, we can show the link with  
Figure~\ref{fig:mermin_experiment}. In this case each graph has been obtained by
using a code line such as in Listing~\ref{lst:code_exec} (here
\py{target_state_ket_string_to_vector} is a function used to convert a string of
a specific format into a vector, in this case the vector is $\ket{0000}$). So,
for four qubits, we set the target state as $\ket{0000}$, for five qubits as 
$\ket{00000}$, and so on. This is enough for symmetry reasons (searching for $
\ket{1001}$ instead of $\ket{0000}$ yields similar results).

\begin{listing}
\begin{minted}{python}
>>> grover(target_state_ket_string_to_vector("0000"))
0.173154027401573
1.01189404012534
-0.469906068136016
\end{minted}
\caption{Mermin evaluation in Grover algorithm example}
\label{lst:code_exec}
\end{listing}



\subsection{Quantum Fourier Transform implementation} 
\label{sub:quantum_fourier_transform_implementation}

For the QFT, the main function \py{qft} is reproduced in
Listing~\ref{lst:qft_func}. The parameter \py{state} is the state ran through
the QFT, generally a periodic state $\ket{\varphi^{l,r}}$ generated by the
function  \py{periodic_state} (Listing~\ref{lst:periodic_generation}). The
function  \py{qft} first calls an implementation \py{qft_run} of the QFT,
detailed in Section~\ref{sub:qft_execution}, and stores the computed states in
the list  \py{states}. Then the states are directly evaluated. The important
difference compared to Grover's algorithm implementation is the fact that we are
not using a separate optimization step. Indeed, since we are not running along a
known straight path, it makes it impossible to use a single optimized Mermin
operator. Because of this, the optimization process is included in the
evaluation process: each evaluation requires an optimization. The evaluation
process is thus performed by the function \py{qft_evaluate}
(Section~\ref{sub:qft_evaluation}), printing the evaluation as well.

\begin{listing}
\begin{minted}{python}
def qft_main(state):
  states = qft_run(state)
  return qft_evaluate(states)
\end{minted}
\caption{Main function for QFT entanglement study}
\label{lst:qft_func}
\end{listing}

\begin{listing}
\begin{minted}{python}
def periodic_state(l,r,nWires):
  N = 2**nWires
  result = vector(N)
  for i in range(ceil((N-l)/r)):
    result[l+i*r] = 1
  return result.normalized()
\end{minted}
\caption{Function used to generate the periodic state $\ket{\varphi^{l,r}}$}
\label{lst:periodic_generation}
\end{listing}

\subsubsection{Execution}
\label{sub:qft_execution}

The function \py{qft_run} (Listing~\ref{lst:qft_exec}) uses the same circuit
format as \py{grover_run} presented in Section~\ref{sub:grover_execution}. This
circuit is built by \py{qft_layers} (Listing~\ref{lst:qft_layers}) and run by
\py{run}. In this case however, the states do not need to be filtered, resulting
in an almost trivial  \py{qft_run} function.

\begin{listing}
\begin{minted}{python}
def qft_run(state):
  layers = qft_layers(state)
  states, _ = run(layers, state)
  return states
\end{minted}
\caption{Function running the QFT}
\label{lst:qft_exec}
\end{listing}

The \py{qft_layers} function uses two functions not detailed here. \py{swap}
returns a matrix corresponding to the swap of two wires \py{wire1} and 
\py{wire2} and the identity on the other wires concerned. The \py{R} method
returns the controlled rotation of angle $e^{\frac{2i\pi}{2^k}}$, with the
rotation being performed on the wire \py{target} controlled by the wire 
\py{control}. The two matrices built by these function have a size \py{2**size}.
With these two functions, \py{qft_layers} builds the circuit for the QFT using
\py{R} on the whole width of the circuit when a rotation is needed and using
\py{swap} only at the end to build the global swap (in fact, \py{swap} is also
used in \py{R} and that is the reason why this implementation of swap on two
wire have been chosen instead of a more general arbitrary permutation gate).

\begin{listing}
\begin{minted}{python}
def qft_layers(state):
  def swap(wire1,wire2,size):
    ...
  def R(k,target,control,size):
    ...
  H = matrix(field, [[1,  1], 
                     [1, -1]])/sqrt(2)
  I2 = matrix.identity(field, 2)
  nWires = log(len(state))/log(2)
  layers = []

  for wire in range(nWires):
    layers.append([I2]*wire + [H] + [I2]*(nWires-wire-1))
    for k in range(2, nWires-(wire-1)):
      layers.append([R(k, wire, k+(wire-1), nWires)])

  global_swap = matrix.identity(field, 2**nWires)
  for wire in range(nWires/2):
    global_swap *= swap(wire, nWires-1-wire, nWires)
  layers.append([global_swap])

  return layers
\end{minted}
\caption{Function building the circuit of the QFT}
\label{lst:qft_layers}
\end{listing}


\subsubsection{Evaluation}
\label{sub:qft_evaluation}

In this case again, the evaluation is conceptually simpler than in Grover's
algorithm. Indeed, since the optimization needs to be performed for each
evaluation, the result printed at each step is simply the optimal point reached
by the \py{optimize} function (the same as described in
Section~\ref{sub:grover_optimization}). In this case, a notable difference in
the usage of \py{optimize} is the presence of \py{3*n*2} coefficients. Indeed,
this time, we do not want a trend for the evaluation's evolution and a "good
enough" $M_n$, we need the true optimal $M_n$ (or as least as optimal as
possible). This means that we do not stand satisfied by the constant $a_n =
\alpha X + \beta Y + \delta Z$ but we have $\alpha$, $\beta$ and $\delta$
variable as explained in \ref{sub:method_qft} (where they become $(\alpha_i)_{1
\leq i \leq 6n}$).

Because of this, the function \py{M_func} (Listing~\ref{lst:qft_eval}) we
optimize is now calling \py{M_eval_all} instead of \py{M_eval}. The difference
is that \py{M_eval} took only $3 \times 2$ coefficients to compute $M_n$ with
$a_i = \alpha X + \beta Y + \delta Z$ whereas this time the $a_i$'s are variable
thus \py{M_eval_all} takes as argument \py{_a_coefs} and  \py{_a_prime_coefs}
two lists of triplets (each triplet encoding one $a_i$). This is the reason why
we need to go through \py{coefficients_packing}:  the \py{optimize} function
needs a flat list of reals to feed into the optimized function, so to accomplish
that, the arguments of \py{M_func} is a flat list and is packed in the proper
shape by \py{coefficients_packing} ($a_i = $ \py{_a_coefs[i][0]}$X +
$\py{_a_coefs[i][1]}$Y + $\py{_a_coefs[i][2]}$Z$).

\begin{listing}
\begin{minted}{python}
def qft_evaluate(states):
  n = log(len(states[0]))/log(2)
  for state in states:
	  rho = matrix(state).transpose()*matrix(state)

	  def M_func(_a_a_prime_coefs):
	    _a_coefs, _a_prime_coefs = coefficients_packing(_a_a_prime_coefs)
	    return M_eval_all(n, _a_coefs, _a_prime_coefs, rho)

	  _,value = optimize(M_func, [1]*3*n*2, 5, 10**(-2), 10**2)

	  print value
\end{minted}
\caption{Evaluation function for the QFT}
\label{lst:qft_eval}
\end{listing}


\subsection{Implementation recap} 
\label{sub:implementation_recap}

Finally, to conclude this section, we recall the functions reusable in a general
context, the \py{run} function can be used for general purpose quantum circuit
simulation and the Mermin evaluation process can be used for arbitrary state
entanglement evaluation. An issue previously mentioned was the correctness
between the process and the simulation, and here this issue is tackled by
structured and clear code. This structure also helps the code to be more
modular, for instance, if the user wants to change the optimization method for
more speed or precision, it can be easily achieved.

\begin{remark}
Note that the actual functions have more parameters that are ignored here for
simplicity's sake. For example, each function has a verbose mode, to display
more information about its run.
\end{remark}




\section{Conclusion} 
\label{sec:conclusion}

With these experiments, we showed that evaluation with Mermin polynomials is a
valuable tool to study entanglement within quantum algorithms. The study of
Grover's algorithm showed us that the Mermin evaluation can be used to check
properties like non-locality and evolution of entanglement during the execution
of the algorithm. In our study of the QFT algorithm we  showed that the Mermin
evaluation can sometimes be compared to the evaluation of an algebraic
invariant, such as the Cayley hyperdeterminant, but not consistently. 

This possibility of ``property checking'' is promising as an attack point on the
problem of quantum program verification. Indeed, in both cases studied in this
article, the Mermin evaluation corresponds to an experimental measurement that
could be performed on a quantum computer device. See for instance~\cite{AL16}
for examples  of Mermin evaluation of a $5$-qubit computer. So, in addition to
studying more algorithms, it may be interesting to use the Mermin evaluation in
arbitrary state checking in true quantum computers in a near future.


\section*{Acknowledgments} 
\label{sec:acknowledgments}
\addcontentsline{toc}{section}{Acknowledgement}

This project is supported by the French Investissements d'Avenir program,
project ISITE-BFC (contract ANR-15-IDEX-03). The computations have been
performed on the supercomputer facilities of the Mésocentre de calcul de
Franche-Comté.


\bibliographystyle{alpha}
\bibliography{library.bib}

\newpage

\appendix
\section{Explicit states for Grover's algorithm} 
\label{sec:explicit_states_for_grover_s_algorithm}

\begin{proposition}
\label{prop:explicit_state_grover}

\cite[Observation 1]{HJN16} The state $\ket{\varphi_k}$ after $k$ iterations of
Grover's algorithm can be written as follows:

\begin{equation}
\label{eq:explicit_states_appendix}
\ket{\varphi_k} = \tilde{\alpha}_k \sum_{{\bf x}\in S} \ket{\bf x} + 
\tilde{\beta}_k \ket{+}^{\otimes n}
\end{equation}

\noindent with $\tilde{\alpha}_k=\dfrac{\cos(\frac{2k+1}{2} \theta)}{\sqrt{|S|}}
- \dfrac{\sin(\frac{2k + 1}{2}\theta)}{\sqrt{N - |S|}}$ and
$\tilde{\beta}_k = 2^{n/2} \dfrac{\sin(\frac{2k + 1}{2}\theta)}{\sqrt{N-|S|}}$.
\end{proposition}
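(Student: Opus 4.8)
The plan is to start from the standard closed form of the state after $k$ Grover iterations and rewrite its ``non-solution'' part in terms of the uniform superposition $\ket{+}^{\otimes n}$. From the two-dimensional rotation picture of the Grover iterate (see~\cite{NC10}), the state expressed in the spanning pair $\sum_{\mathbf{x}\in S}\ket{\mathbf{x}}$ and $\sum_{\mathbf{x}\notin S}\ket{\mathbf{x}}$ is
$$\ket{\varphi_k}=\alpha_k \sum_{\mathbf{x} \in S}\ket{\mathbf{x}} + \beta_k \sum_{\mathbf{x} \notin S} \ket{\mathbf{x}},$$
where $\alpha_k$ and $\beta_k$ are the sine/cosine amplitudes of $\tfrac{2k+1}{2}\theta$ normalized by $\sqrt{|S|}$ and $\sqrt{N-|S|}$ respectively, as recorded in Equation~(\ref{eq:explicit_states}). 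I would take this as the given starting point, since it is precisely the established description of the algorithm that the proposition refers to.

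The single nontrivial step is the change of spanning set. Since $\ket{+}^{\otimes n}=2^{-n/2}\sum_{\mathbf{x}=0}^{N-1}\ket{\mathbf{x}}$, splitting the sum over $S$ and its complement yields the identity
$$\sum_{\mathbf{x}\notin S}\ket{\mathbf{x}} = 2^{n/2}\ket{+}^{\otimes n} - \sum_{\mathbf{x}\in S}\ket{\mathbf{x}}.$$
Substituting this into the closed form and collecting the coefficients of the two vectors $\sum_{\mathbf{x}\in S}\ket{\mathbf{x}}$ and $\ket{+}^{\otimes n}$ immediately produces $\tilde{\alpha}_k=\alpha_k-\beta_k$ and $\tilde{\beta}_k=2^{n/2}\beta_k$, which is exactly Equation~(\ref{eq:explicit_states_bis}). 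Inserting the explicit values of $\alpha_k$ and $\beta_k$ as functions of $\tfrac{2k+1}{2}\theta$ then gives the stated formulas for $\tilde{\alpha}_k$ and $\tilde{\beta}_k$; this final part is a direct substitution.

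There is no real obstacle here, as the argument is pure linear algebra over a two-element spanning set. The only points requiring care are keeping the $\sin$/$\cos$ convention consistent with the choice fixed by $\sin(\theta/2)=\sqrt{|S|/N}$, and observing that $\sum_{\mathbf{x}\in S}\ket{\mathbf{x}}$ and $\ket{+}^{\otimes n}$ are \emph{not} orthogonal, so the result is an equality of vectors rather than an orthogonal decomposition. As an alternative that avoids invoking the closed form, one could instead prove the identity directly by induction on $k$, tracking the action of the oracle $U_f$ and the diffusion operator $\mathcal{D}$ on a state of the assumed shape; but the substitution route above is shorter and reuses the known expression for $\ket{\varphi_k}$.
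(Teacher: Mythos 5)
Your proof is correct and follows essentially the same route as the paper's: the paper's own argument also comes down to the change of spanning set $\sum_{{\bf x}\notin S}\ket{\bf x} = 2^{n/2}\ket{+}^{\otimes n} - \sum_{{\bf x}\in S}\ket{\bf x}$ applied to Equation~(\ref{eq:explicit_states}), the only difference being that the paper first re-derives $\sin(\theta/2)=\sqrt{|S|/N}$ and the rotation structure of $\mathcal{L}$ from the two-reflection picture rather than citing them. One caveat on your claim that the final substitution ``gives the stated formulas'': it actually yields $\tilde{\alpha}_k=\frac{\sin(\frac{2k+1}{2}\theta)}{\sqrt{|S|}}-\frac{\cos(\frac{2k+1}{2}\theta)}{\sqrt{N-|S|}}$ and $\tilde{\beta}_k=2^{n/2}\frac{\cos(\frac{2k+1}{2}\theta)}{\sqrt{N-|S|}}$, consistent with the main text and with the check $\tilde{\alpha}_0=0$, $\tilde{\beta}_0=1$, whereas the proposition as printed has $\sin$ and $\cos$ interchanged --- that is a typo in the statement (the paper's own proof exhibits the same mismatch), not a gap in your argument.
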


\begin{proof}
With $\ket{\varphi_0} = \ket{+}^{\otimes n}$, we can write:

\begin{equation*}
\ket{\varphi_k} = \mathcal{L}^k\ket{\varphi_0} = \dfrac{a_k}{\sqrt{|S|}}\sum_
{{\bf x}\in S} \ket{\bf x}+ \dfrac{b_k}{\sqrt{N-|S|}}\sum_{{\bf x}\notin S} 
\ket{\bf x}
\end{equation*}

\noindent where $\mathcal{L}$ is the loop (oracle and diffusion operator) in
Grover's algorithm.

The oracle is a reflection about $(\sum_{{\bf x}\in S}\ket{\bf x})^\bot = \sum_{
{\bf x}\notin S} \ket{\bf x}$ and the diffusion operator is a reflection about
$\ket{+}^{\otimes n}$. The composition of these two symmetries is a rotation
whose angle $\theta$ is the double of the angle between $\sum_{{\bf x}\notin S}
\ket{\bf x}$ and $\ket{+}^{\otimes n}$. So,
$$
\begin{array}{rll}
\ket{+}^{\otimes n}     & = & \frac{1}{\sqrt{|S|}}\sin(\frac{\theta}{2})
    \sum_{{\bf x}\in S}\ket{\bf x}+\frac{1}{\sqrt{N-|S|}}\cos(\frac{\theta}{2})
    \sum_{{\bf x}\notin S} \ket{\bf x}\\
\frac{1}{\sqrt{N}} \left(\sum_{{\bf x}\in S} \ket{\bf x}+\sum_{{\bf x}\notin S} 
  \ket{\bf x}\right)    & = & \frac{1}{\sqrt{|S|}}\sin(\frac{\theta}{2})
    \sum_{{\bf x}\in S}\ket{\bf x}+\frac{1}{\sqrt{N-|S|}}\cos(\frac{\theta}{2})
    \sum_{{\bf x}\notin S} \ket{\bf x}\\
\frac{1}{\sqrt{N}}\sum_{{\bf x}\in S} \ket{\bf x}
                        & = & \frac{1}{\sqrt{|S|}}\sin(\frac{\theta}{2})
    \sum_{{\bf x}\in S} \ket{\bf x}\\
\frac{1}{\sqrt{N}}      & = & \frac{1}{\sqrt{|S|}}\sin(\frac{\theta}{2})\\
\sin(\frac{\theta}{2})  & = & \sqrt{\frac{|S|}{N}}.
\end{array}
$$
The fact that $\mathcal{L}$ is a rotation of angle $\theta$ gives
 $a_k=\sin\left(\theta_k\right)$  and $b_k = \cos\left( 
\theta_k \right)$  with $\theta_k = k \theta + \theta/2$.
 Equation~(\ref{eq:explicit_states}) then comes from $\alpha_k = 
\frac{1}{\sqrt{|S|}}\sin(\frac{2k + 1}{2} \theta)$ and $\beta_k =  
\frac{1}{\sqrt{N-|S|}}\cos(\frac{2k + 1}{2}\theta)$.

With this, we can now take  $\tilde{\alpha}_k = \alpha_k - \beta_k$  and
$\tilde{\beta}_k = 2^{n/2} \beta_k$ which gives us
\begin{equation*}
 \begin{array}{lll}
  \ket{\varphi_k} & = & \alpha_k \sum_{{\bf x}\in S} \ket{\bf x} + \beta_k 
                    \sum_{{\bf x}\notin S} \ket{\bf x}\\
                  & = & (\alpha_k - \beta_k) \sum_{{\bf x}\in S} \ket{{\bf x}} + 
                    \beta_k \sum_{{\bf
x}=0}^{N-1} \ket{\bf x}\\
                  & = & \tilde{\alpha}_k\sum_{{\bf x}\in S}\ket{{\bf x}} + 
                    \tilde{\beta}_k \ket{+}^{\otimes n}
 \end{array}
\end{equation*}
since $\ket{+}^{\otimes n} = \left(\frac{1}{\sqrt{2}}\right)^n \sum_{{\bf
x}=0}^{N-1} \ket{\bf x}$. \qed
\end{proof}

\begin{proposition}
\label{prop:grover_a_b_monotonous}
In Proposition~\ref{prop:explicit_state_grover}, $\tilde{\alpha}_k$ increases
for $k$ between 0 and $\frac{\pi}{4}\sqrt{\frac{N}{|S|}}-\frac{1}{2}$ and
$\tilde {\beta}_k$ decreases on the same interval.
\end{proposition}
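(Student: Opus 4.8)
The plan is to treat $\tilde\alpha_k$ and $\tilde\beta_k$ as smooth functions of a \emph{real} variable $k$ and to control the sign of their derivatives on the interval $I=[0,\tfrac{\pi}{4}\sqrt{N/|S|}-\tfrac12]$; strict monotonicity of the original integer-indexed sequences then follows immediately. Writing $\phi_k=\tfrac{2k+1}{2}\theta$, recall from Proposition~\ref{prop:explicit_state_grover} (with $\alpha_k=\sin\phi_k/\sqrt{|S|}$ and $\beta_k=\cos\phi_k/\sqrt{N-|S|}$) that $\tilde\beta_k=\tfrac{2^{n/2}}{\sqrt{N-|S|}}\cos\phi_k$ and $\tilde\alpha_k=\alpha_k-\beta_k=\tfrac{\sin\phi_k}{\sqrt{|S|}}-\tfrac{\cos\phi_k}{\sqrt{N-|S|}}$. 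The map $k\mapsto\phi_k$ is affine and increasing, with $\phi_k'=\theta>0$, so it suffices to understand the range of $\phi_k$ on $I$. Setting $s=\sqrt{|S|/N}$, the relation $\sin(\theta/2)=s$ gives $\theta/2=\arcsin s$, hence $\phi_0=\arcsin s$ and, at the right endpoint, $\phi_{\max}=\tfrac{\pi}{4}\sqrt{N/|S|}\,\theta=\tfrac{\pi}{2}\,\tfrac{\arcsin s}{s}$.

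For $\tilde\beta_k$, the chain rule gives $\tfrac{d}{dk}\tilde\beta_k=-\tfrac{2^{n/2}\theta}{\sqrt{N-|S|}}\sin\phi_k$, which is strictly negative as soon as $\phi_k\in(0,\pi)$. Since $\phi_k$ is increasing it is enough to check $\phi_{\max}=\tfrac{\pi}{2}\tfrac{\arcsin s}{s}<\pi$, i.e.\ $\arcsin s<2s$ for $s\in(0,1]$. This is a routine one-variable estimate: the function $g(s)=2s-\arcsin s$ vanishes at $0$ and has $g'(s)=2-(1-s^2)^{-1/2}$, positive until $s=\sqrt3/2$ and negative afterwards, so $g$ is unimodal with $g(\sqrt3/2)>0$ and $g(1)=2-\pi/2>0$; hence $g>0$ throughout and $\tilde\beta_k$ is strictly decreasing on $I$.

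For $\tilde\alpha_k$, the chain rule gives $\tfrac{d}{dk}\tilde\alpha_k=\theta\left(\tfrac{\cos\phi_k}{\sqrt{|S|}}+\tfrac{\sin\phi_k}{\sqrt{N-|S|}}\right)$. The plan is to collapse the two trigonometric terms into a single cosine: with $t=\sqrt{|S|/(N-|S|)}$ one has $\sqrt{1+t^2}=\sqrt{N/(N-|S|)}$ and $\arctan t=\arcsin s=\theta/2$, so
\begin{equation*}
\frac{d}{dk}\tilde\alpha_k=\frac{\theta}{\sqrt{|S|}}\sqrt{\frac{N}{N-|S|}}\,\cos\!\left(\phi_k-\tfrac{\theta}{2}\right).
\end{equation*}
Since $\phi_k\ge\phi_0=\theta/2$ on $I$, we have $\phi_k-\theta/2\ge0$, and positivity of the derivative reduces to checking the single inequality $\phi_{\max}-\theta/2<\pi/2$ at the right endpoint, that is,
\begin{equation*}
\frac{\pi}{2}\,\frac{\arcsin s}{s}-\arcsin s<\frac{\pi}{2}.
\end{equation*}

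This last inequality is the crux and the step I expect to be the main obstacle, because at the right endpoint $\phi_k$ \emph{slightly exceeds} $\pi/2$ (indeed $\tfrac{\pi}{2}\tfrac{\arcsin s}{s}>\pi/2$ since $\arcsin s>s$), so the naive ``$\sin$ increasing, $\cos$ decreasing on $[0,\pi/2]$'' reasoning fails for $\alpha_k$ on its own; one must genuinely exploit the cancellation in $\tilde\alpha_k=\alpha_k-\beta_k$ to keep the derivative positive just past $\pi/2$. Substituting $u=\arcsin s\in(0,\pi/2]$, the inequality becomes $\pi(u-\sin u)<2u\sin u$, and I would prove it by showing that $h(u)=2u\sin u+\pi\sin u-\pi u$ satisfies $h(0)=h'(0)=0$, $h''(0)=4>0$, and is unimodal with $h(\pi/2)=2\pi-\pi^2/2>0$, so that $h>0$ on the whole range. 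With both derivative signs established, $\tilde\alpha_k$ is strictly increasing and $\tilde\beta_k$ strictly decreasing on $I$, which is the assertion of the proposition.
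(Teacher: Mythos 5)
Your proof is correct, but it takes a genuinely different --- and sharper --- route than the paper's. The paper's proof works termwise: invoking the approximation $\theta \approx 2\sqrt{|S|/N}$ valid for $|S|\ll N$, it asserts that $\theta_k=\frac{2k+1}{2}\theta$ stays within $[0,\pi/2]$ on the interval, so that $\alpha_k\propto\sin\theta_k$ increases and $\beta_k\propto\cos\theta_k$ decreases, and then simply reads off the monotonicity of $\tilde\alpha_k=\alpha_k-\beta_k$ (increasing minus decreasing) and $\tilde\beta_k=2^{n/2}\beta_k$. As you observe, the termwise claim for $\alpha_k$ is not exactly true: at the right endpoint $\theta_k=\frac{\pi}{2}\frac{\arcsin s}{s}>\frac{\pi}{2}$, so $\alpha_k$ on its own turns around slightly before the end of the interval, and the paper's argument survives only under the asymptotic regime it invokes. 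Your proof repairs this: the phase-shift identity collapsing $\frac{d}{dk}\tilde\alpha_k$ into a single term proportional to $\cos(\phi_k-\theta/2)$ makes explicit precisely the cancellation between $\alpha_k$ and $\beta_k$ that the paper never isolates, and the two scalar inequalities $\arcsin s<2s$ and $\pi(u-\sin u)<2u\sin u$ (both of which check out, including the unimodality arguments --- e.g.\ $h''(u)=4\cos u-(2u+\pi)\sin u$ is strictly decreasing, so $h'$ rises then falls from $h'(0)=0$ to $h'(\pi/2)=2-\pi<0$, and $h(\pi/2)=2\pi-\pi^2/2>0$) yield an exact statement valid for all $1\le|S|<N$ with no asymptotic assumption. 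What the paper's route buys is brevity and a direct link to the standard rotation picture of Grover's algorithm; what yours buys is an actual proof of the proposition as stated, over the full real interval.
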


\begin{proof}

The optimal number of iterations of the loop $\mathcal{L}$ in Grover's algorithm
is the smallest value $k_{opt}$ of $k$ such that $a_k = 1$, \emph{i.e.},
$\theta_{k_{opt}} = \pi/2$. With $|S|\ll N$,
$\sin\left(\theta/2\right)=\sqrt{|S|/N}$ gives $\theta \approx 2\sqrt{|S|/N}$
and $\theta_k \approx (2k+1)\sqrt{|S|/N}$. Finally $(2k_{opt}+1)\sqrt{|S|/N}$
 optimally approximates $\pi/2$ if $k_{opt} =
\left\lfloor \frac{\pi}{4}\sqrt{\frac{N}{|S|}}-\frac{1}{2} \right\rceil = \left\lfloor
\frac{\pi}{4}\sqrt{\frac{N}{|S|}} \right\rfloor$.

Moreover, $a_k=\sin\left(\theta_k\right)$ and $\alpha_k = \frac{1}{\sqrt{|S|}}
a_k$ are increasing and $b_k = \cos\left( \theta_k \right)$ and $\beta_k =  
\frac{1}{\sqrt{N-|S|}} b_k$ are 
decreasing for $k$ from $0$ to
$\left(\frac{\pi}{4}\sqrt{\frac{N}{|S|}}-\frac{1}{2}\right)$.
From the expressions $\tilde{\alpha}_k = \alpha_k - \beta_k$ and $
\tilde{\beta}_k = 2^{n/2} \beta_k$, we get the result of the proposition.\qed

\end{proof}

\newpage
\section{Cayley hyperdeterminant \texorpdfstring{$\Delta_{2222}$}{D2222}}
\label{appendix:cayley_hyperdeterminant_D2222}

Let $\ket{\varphi}=\sum_{i,j,k,l\in\{0,1\}} a_{i,j,k,l}\ket{ijkl}$ be a
four-qubit state. The algebra of polynomial invariants for the four-qubit
Hilbert space can be generated by the four polynomials $H$, $L$, $M$ and $D$
defined as follows~\cite{LT03}:

\[H=a_{0000}a_{1111} - a_{1000}a_{0111} - a_{0100}a_{1011} + a_{1100}a_{0011}\] 
\[-a_{0010}a_{1101} + a_{1010}a_{0101} + a_{0110}a_{1001} - a_{1110}a_{0001}\]
is an invariant of degree 2.

\[ L=\left|\begin{array}{cccc}
a_{0000}&a_{0010}&a_{0001}&a_{0011}\\
a_{1000}&a_{1010}&a_{1001}&a_{1011}\\
a_{0100}&a_{0110}&a_{0101}&a_{0111}\\
a_{1100}&a_{1110}&a_{1101}&a_{1111}
\end{array}\right| \hspace{10mm}\mbox{ and }\hspace{10mm} 
M=\left|\begin{array}{cccc}
a_{0000}&a_{0001}&a_{0100}&a_{0101}\\
a_{1000}&a_{1001}&a_{1100}&a_{1101}\\
a_{0010}&a_{0011}&a_{0110}&a_{0111}\\
a_{1010}&a_{1011}&a_{1110}&a_{1111}
\end{array}\right|
\]
are two invariants of degree 4.

Consider the partial derivative 
$$b_{xt}:=\det\left(\dfrac{\partial^2 A} {\partial
y_i\partial z_j}\right)$$
of the quadrilinear form $A =\sum_{i,j,k,l\in\{0,1\}} a_{i,j,k,l} x_i y_j z_k 
t_l$ with respect to the variables $y$ and $z$. This quadratic form with
variables $x$ and $t$ can be interpreted as a bilinear form on the
three-dimensional space $\text{Sym}^2(\mathbb{C}^2)$, \emph{i.e.}, there is a
$3\times 3$ matrix $B_{xt}$ satisfying
\[
b_{xt}=[x_0^2,x_0x_1,x_1^2]~B_{xt}~\left[
\begin{array}{c}
t_0^2\\
t_0t_1\\
t_1^2 
\end{array}\right].
\]
Then $D =\det(B_{xt})$ is an invariant of degree 6. 

Let's introduce the invariant polynomials 
$$U=H^2-4(L-M), \hspace{10mm} V=12(HD-2LM),$$

$$ S=\dfrac{1}{12}(U^2-2V)  \hspace{10mm}  \mbox{ and } \hspace{10mm} 
T=\dfrac{1}{216}(U^3-3UV+216D^2).$$
\noindent  Then the Cayley hyperdeterminant
is~\cite{LT03}:
\begin{equation*}
 \Delta_{2222}=S^3-27T^2.
\end{equation*}

\end{document}